\newcommand\fs@booktabsruled{%
  \def\@fs@cfont{\bfseries\strut}\let\@fs@capt\floatc@ruled
  \def\@fs@pre{\hrule height\heavyrulewidth depth0pt \kern\belowrulesep}%
  \def\@fs@mid{\kern\aboverulesep\hrule height\lightrulewidth\kern\belowrulesep}%
  \def\@fs@post{\kern\aboverulesep\hrule height\heavyrulewidth\relax}%
  \let\@fs@iftopcapt\iftrue
}
\newtheorem{theorem}{Theorem}
\newtheorem{lemma}{Lemma}
\newtheorem{corollary}{Corollary}
\definecolor{darkblue}{RGB}{0,0,127} 
\definecolor{darkgreen}{RGB}{0,150,0}
\newcommand\numberthis{\addtocounter{equation}{1}\tag{\theequation}}
\newcommand{\expect}[1]{\mathbb{E}\left[#1\right]}
\newcommand{\isnorm}{\sim \mathcal{N}}
\newcommand{\T}{^\text{T}}
\newcommand{\rhoav}{\bar{\rho}}
\newcommand{\ket}[1]{\left|#1\right>}
\newcommand{\argmin}[1]{\underset{#1}{\mathrm{argmin}}}
\DeclareMathOperator{\Tr}{Tr}
\newcommand{\methodname}{Bayesian Hamiltonian Learning}
\newcommand{\method}{BHL}
\begin{document}
\title{Scalable Bayesian Hamiltonian Learning}

\author{Tim J.\ Evans}
\author{Robin Harper}
\author{Steven T.\ Flammia}
\affiliation{Centre for Engineered Quantum Systems, School of Physics, The University of Sydney, Sydney, Australia}
\date{\today}

\begin{abstract}
As the size of quantum devices continues to grow, the development of scalable methods to characterise and diagnose noise is becoming an increasingly important problem.
Recent methods have shown how to efficiently estimate Hamiltonians in principle, but they are poorly conditioned and can only characterize the system up to a scalar factor, making them difficult to use in practice.
In this work we present a Bayesian methodology, called \methodname{}~(\method{}), that addresses both of these issues by making use of any or all, of the following: well-characterised experimental control of Hamiltonian couplings, the preparation of multiple states, and the availability of any prior information for the Hamiltonian. 
Importantly, \method{} can be used online as an \emph{adaptive} measurement protocol, updating estimates and their corresponding uncertainties as experimental data become available.
In addition, we show that multiple input states and control fields enable \method{} to reconstruct Hamiltonians that are neither generic nor spatially local.
We demonstrate the scalability and accuracy of our method with numerical simulations on up to 100 qubits. 
These practical results are complemented by several theoretical contributions. 
We prove that a $k$-body Hamiltonian $H$ whose correlation matrix has a spectral gap $\Delta$ can be estimated to precision $\varepsilon$ with only $\tilde{O}\bigl(n^{3k}/(\varepsilon \Delta)^{3/2}\bigr)$ measurements.
We use two subroutines that may be of independent interest: 
First, an algorithm to approximate a steady state of $H$ starting from an arbitrary input that converges factorially in the number of samples; and second, an algorithm to estimate the expectation values of $m$ Pauli operators with weight $\le k$ to precision $\epsilon$ using only $O(\epsilon^{-2} 3^k \log m)$ measurements, which quadratically improves a recent result by Cotler and Wilczek. 
\end{abstract}

\maketitle

\section{Introduction}\label{section:introduction}

Extracting diagnostic information about noise processes is central to the development and improvement of quantum devices. 
Already we are witnessing the realisation of quantum devices that are of a size that is out of reach for standard tools for experimental noise characterisation~\cite{Preskill2018quantumcomputingin}. 
Randomized benchmarking and its variants~\cite{emerson2005scalable,Emerson2007,knill2008randomized,dankert2009exact,magesan2012characterizing} offer efficient characterisation of quantum devices through averaging the noise and consequently reducing the number of parameters to be learned. 
However, for the most part they offer performance metrics and do not pinpoint the physical origin of noise sources, giving limited diagnostic insight.

New approaches have been proposed and experimentally demonstrated that yield more detailed error models than standard randomized benchmarking, for instance allowing full reconstruction of Pauli error channels and consequently all correlated errors~\cite{erhard2019characterizing,harper2019efficient,flammia2019efficient}.
Such methods, however, remove coherent noise terms in order to maintain scalability. 
Other scalable methods have been proposed to determine all $k$-qubit reconstructed density matrices of multi-qubit systems \cite{cotler2019,garciapereze2019}, although such methods are aimed at reconstructing specific states and do not give insight as to the dynamics of the systems in question.

Hamiltonian learning is a well-studied problem~\cite{granade2012robustonline,granade2014quantumhamiltonian,wiebe2014hamiltonianlearning,wiebe2015quantumbootstrapping,krastanov2019stochastic}  that addresses the need to characterize coherent noise sources. 
In general, it requires the estimation of a number of parameters that scales exponentially in the system size, but most physically relevant Hamiltonians will have only few-body interactions and are described by polynomially many parameters. 
Reconstructing the Hamiltonian of a quantum system can provide rich diagnostic information for an experimentalist seeking to reduce noise-induced errors in a device. 
The benefits from learning the Hamiltonian extend past diagnosis, opening up a range of engineering tools that can be used to counteract, say, noise couplings. 
For instance, pulse shaping techniques such as GRAPE~\cite{khaneja2005} can be used to design specific pulse shapes, leading to vastly improved fidelities \cite{Yang2019}. 
The GRAPE optimization procedure, however, requires a good characterisation of the Hamiltonian affecting the system making it prohibitive for larger devices.

Recent works~\cite{qi2019determininglocal,bairey2019learning,garrison2018does,chertkov2018computational,hou2019determining} have shown how one can reconstruct a generic spatially local Hamiltonian given a single state that commutes with the Hamiltonian. 
These results can also be generalised to local Lindbladians~\cite{bairey2019lindblad}. 
Despite being remarkable technical results, there remain several fundamental barriers to the practical implementation of these ideas.
For example, the unknown Hamiltonian is only recovered up to a scalar factor, $ H\sim\alpha H $, and the inverse problem is generally ill-conditioned, making it highly sensitive to noise.
Also, there are many physically interesting Hamiltonians that are neither generic nor local. 

In the next section, we will review the method of Hamiltonian estimation developed in~\cite{qi2019determininglocal,bairey2019learning} and describe in more detail the barriers to making this method practical. 
We present our main results for addressing these difficulties in \cref{section:results} and provide a discussion of future directions in \cref{section:discussion}. 
In the appendices, we provide a derivation of our Bayesian model, provide proofs for our claims about the subroutines together with pseudocode for them, and prove our upper bound on the sample complexity for accurate reconstruction.

\subsection{Problem statement}\label{section:problem-setting}

Consider a $d$-dimensional quantum system consisting of $n$ finite-dimensional spins, and let $H$ be the system Hamiltonian. 
We will focus this discussion on qubit Hamiltonians, so $d=2^n$. 
Let us choose as an operator basis the $n$-qubit Pauli matrices, $\{P_j\}$ so that $P_j^{\vphantom{\dagger}}=P_j^\dagger$ and $\Tr(P_j P_k)=d\delta_{jk}$.
We can then expand 
\begin{align}
    H = \sum_{i=1}^{m} c_i P_i
\end{align}
with a vector of couplings $c \in \mathbb{R}^m$ and $c_i = \frac{1}{d}\Tr(P_i H)$. 
Any Hamiltonian can be written in such a way when $m=d^2$, but in practice most Hamiltonians have only few-body couplings and hence are well described by an expansion in a local basis having $m=O\bigl(\mathrm{poly}(n)\bigr)$. 
The most physically relevant example is the set of $n$-qubit Pauli operators that act non-trivially on only $k$ or fewer sites (a \emph{$k$-body} operator) or, more restrictively, on $k$ spatially contiguous sites (a \emph{$k$-local} operator). 
When $k=n$ we recover a general operator, but for $k=O(1)$ we can still accurately and efficiently describe generic few-body couplings, including nearly all cases of experimental relevance.

We now define a \emph{steady state} of $H$ to be any state $\rho$ such that $[H,\rho]=0$.
In general, a steady state will depend implicitly on the coupling constants $c$ that define $H$, since $[H(c_1),\rho]=0$ says nothing about the value of $[H(c_2),\rho]$.

When $\rho = \rho(c)$ is a steady state of a $k$-body Hamiltonian $H=H(c)$, it will satisfy a set of $(2k-1)$-body constraints given as follows~\cite{bairey2019learning}. 
Define the $m\times m$ matrix $K = K(\rho)$ given by
\begin{align}
\label{eqn:sqrt-correlation-matrix}
	K_{jk} : = \Tr\bigl(i[P_j,P_k]\rho\bigr)\,.
\end{align}
Then~\cite{qi2019determininglocal,bairey2019learning} the vector $c$ of couplings in $H$ is in the kernel of $K$,
\begin{align}
\label{eqn:K-kernel}
    Kc = 0\,.
\end{align}
Since the matrix elements of $K$ are all observable (they are expectations of hermitian operators), they can be estimated through a series of experiments. 
Finding an approximate null vector from a noisy $\tilde{K} \approx K$ would be an estimator $\hat{c}$ for something proportional to the unknown vector of couplings $c$. 

There are several challenges in implementing this idea in practice. 
First, preparing an appropriate steady state is challenging, since there seems to be a tradeoff in the complexity of preparing steady states and their utility in this estimation scheme: 
States that contain lots of information about $H$ (e.g.\ a ground state of $H$) seem to be difficult to prepare in general, and states that are easy to prepare in general (e.g.\ the maximally mixed state) are not useful because they are compatible with more than one $H$.

Even assuming that a suitable initial state can be prepared \emph{efficiently}, it must still be prepared \emph{accurately}. 
More generally, one must worry about state preparation and measurement errors (SPAM) and how they will affect the method.
Even in the absence of statistical noise from measurements, preparing a state that is a steady state of the wrong Hamiltonian will bias the estimate of~$c$. 
A reliable reconstruction method should be robust to small SPAM errors and should accurately quantify the error uncertainties due to the SPAM.

Next, given an estimate $\hat{K}$ of the matrix $K$, even in the absence of state preparation errors the noise will in general depend on the signal $c$. 
This means that accurately quantifying the uncertainty of an estimate becomes impossible with naive estimators. 

Furthermore, the inverse problem of recovering $\hat{c}$ from $\hat{K}$ is generally ill-posed. 
A simple estimate of how ill-posed it is comes from the spectral gap of $K$, defined as follows.  
Introduce a matrix $M = K\T K$, which is equal to the following $m\times m$ matrix that is called the \emph{correlation matrix}~\cite{qi2019determininglocal,bairey2019learning} 
\begin{align}
\label{eqn:correlation-matrix}
M_{j,k} := \frac{1}{2} \Tr\bigl( \{P_j, P_k\} \rho \bigr) - \Tr\bigl(P_j\rho \bigr) \Tr\bigl(P_k\rho \bigr)\,.
\end{align}
If the eigenvalues of $M$ are $\lambda_m \ge \ldots \ge \lambda_2 \ge \lambda_1$, then we define the \emph{spectral gap} of $K$ to be the quantity $\Delta := \Delta(K) = \lambda_2-\lambda_1$. 
This is of course also the actual spectral gap of the correlation matrix $M$, but it will be more convenient to work with $K$ throughout, so we will abuse language and call this the spectral gap of $K$.

Perhaps surprisingly, in the case of a generic $k$-body Hamiltonian the spectral gap is nonzero~\cite{qi2019determininglocal}, so the kernel of $M$ (and hence $K$) is unique and exact reconstruction of $c$ up to a scale factor is possible in the noiseless case. 
However, the gap is in many cases so small as to preclude any realistic implementation once noise is introduced. 
For example, in Ref.~\cite{bairey2019learning} unique reconstruction was achieved when the noise standard deviation was $10^{-12}$, which corresponds to about $\sim10^{24}$ or more measurements. 
While the small gap ``only'' introduces a constant overhead factor, it is unfortunately too large to allow practical reconstruction. 
Moreover, there is no compelling theoretical reason known for why the overhead introduced by the gap shouldn't also depend on the size $n$ of the Hamiltonian and possibly even exponentially. 

Finally, even after finding a good enough estimate $\hat{K}$ such that there is a unique null space spanned by $\hat{c}$, any real number $\alpha$ gives an equally valid estimate $\alpha \hat{c}$ if \cref{eqn:K-kernel} is the only reconstruction requirement. 
To eliminate this additional ambiguity, either further constraints beyond \cref{eqn:K-kernel} would be required, or a further phase estimation step would have to be performed.

\section{Results}\label{section:results}

\subsection{Summary of main results}\label{section:summaryofresults}

In this paper, we address many of the above difficulties to arrive at a protocol for learning Hamiltonians that is practical for present day experiments in quantum computing and quantum many-body physics.

Our most important contribution is to make the inverse problem well-posed. 
We achieve this by introducing two new degrees of freedom to the experimental design: the ability to choose multiple state preparations and/or multiple control fields. 
At first glance this might seem to increase, not decrease, the complexity of the inverse problem. 
In fact, the addition of even one additional state preparation or control field is enough to improve the spectral gap by many orders of magnitude in relevant situations. 
This makes the total number of measurements required to obtain a useful estimate well within the realm of feasibility for many current experiments. 
Moreover, when adding control fields, the controls themselves are also estimated by the algorithm. 
This addresses one of the central difficulties in using pulse shaping and dynamical decoupling by giving a method to efficiently obtain a complete description of all dynamical variables for a universal set of controls. 

We next show how the estimation can be cast in a Bayesian formulation to yield an algorithm we call \method{}. 
This confers a long list of advantages. 
First, by utilizing prior information the Bayesian estimation achieves a much greater speed of convergence to an improved estimate, and second it intrinsically comes with rigorously justifiable error bars directly from the data without resorting to numerically expensive and heuristic post-processing. 
Remarkably, the Bayesian framework also lets us return a \emph{point estimate} of the true couplings, removing the overall scalar factor ambiguity that has plagued previous methods~\cite{qi2019determininglocal,bairey2019learning}. 
It also allows us to correctly deal with the fact that the noise on the estimate depends on the unknown Hamiltonian itself. 
This in turn helps to avoid over-fitting and problematic bias in estimates.

We prove that the \method{} estimator is robust to a large class of measurement errors, and we address the issue of state preparation errors in two ways. 
First, we incorporate them into the Bayesian model to allow for accurate uncertainty quantification, and second, we prove that approximate state preparations can be used in conjunction with efficient time averaging~\cite{bairey2019learning} to yield improved accuracy in the estimates. 
We can perform the relevant data collection in a highly parallel fashion from single-qubit measurements only, and our algorithm for low-weight Pauli expectation value estimation improves quadratically over the recent work \cite{cotler2019}. 

Finally, we discuss how to implement an online version of \method{}. 
This enables the estimation of Hamiltonians in real-time whenever the preparation of the input steady state can be done quickly.

\subsection{Using multiple input states}\label{section:multiple-preparations}

Let us first consider what happens when we prepare multiple input states, $\rho_1,\ldots,\rho_N$, and construct their corresponding matrices $K_i := K(\rho_i)$. 
Clearly \cref{eqn:K-kernel} must hold for each $K_i$, and we can stack each of these constraints into a single matrix constraint,
\begin{align*}
A := \begin{bmatrix}
K_1\\
\vdots\\
K_N\\
\end{bmatrix}
\text{\ and\ \ }x := c\,.\numberthis\label{eqn:A-multiple-states}
\end{align*}
We adopt the notation $A$ for a composite object that incorporates multiple constraints, and we label our unknown as $x$ in this more general setting. 
The analog of \cref{eqn:K-kernel} now becomes
\begin{align}\label{eqn:A-kernel}
    Ax=0\,.
\end{align}

The spectral gap of $A$ obeys the inequality,
\begin{align}\label{eqn:gap-ineq}
    \Delta(A) \ge \max_j \Delta(K_j)\,,
\end{align}
which follows from Weyl's inequality~\cite{Bhatia1997}.
Thus the gap of $A$ is at least as good as the best constituent constraint matrix $K_j$ that comprises a block of $A$. 

\begin{figure}[t!]
	\centering
	\begin{tikzpicture}
	\node at (0,0) {\includegraphics[width=\columnwidth]{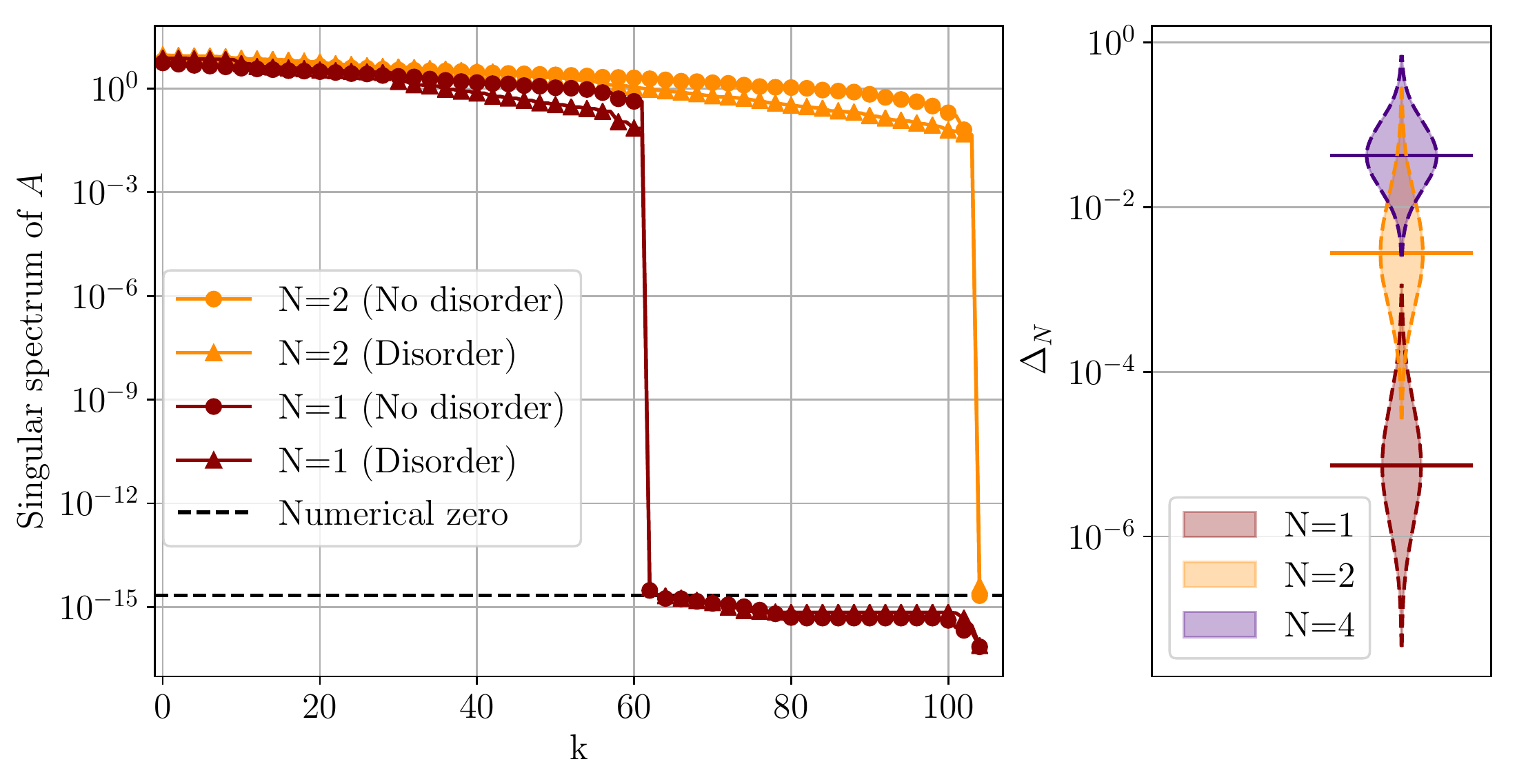}};
	\node at (-1,-2.5) {(a)};
	\node at (3.2,-2.5) {(b)};
	\end{tikzpicture}
	\caption{This figure shows how the use of multiple eigenstates can yield unique reconstruction, even of non-generic and non-local Hamiltonians. 
	In (a) we illustrate the effect of using multiple eigenstates on a randomly generated disordered and non-disordered Hamiltonian in the form of \cref{eqn:lr-ising-hamiltonian}. 
	Plotted are the singular value spectra for the operator $ A $ as more eigenstates are added to the estimation process. 
	We see that for even one additional preparation we remove the degeneracy in $ A $. 
	In (b) we plot the spectral gap $\Delta(K)$ for randomly generated 10-qubit local Hamiltonians using an increasing number of $N$ eigenstates, showing the improvement in spectral gap over a wide variety of Hamiltonians.}
	\label{fig:multiple_states_svs}
\end{figure}

In practice, the inequality (\ref{eqn:gap-ineq}) greatly understates the improvement to the spectral gap in practically relevant cases. 
We illustrate this in \cref{fig:multiple_states_svs} for two separate cases: a disordered 2-local spin chain, as was studied in ref.~\cite{qi2019determininglocal}, and a nonlocal, 2-body Ising-type model. 
The Hamiltonian for the nonlocal model is given by
\begin{align}
\label{eqn:lr-ising-hamiltonian}
    H = \sum_{i\neq j} X_i X_j + \sum_{i\neq j} J_{i,j} P_i P_j\,,
\end{align}
where the couplings $J_{i,j}$ control the strength of arbitrary two-qubit Pauli interactions. 
These couplings can be chosen to be either uniformly 0 (in the case of no disorder) or sampled as independent Gaussian random variables in the disordered case, $J_{i,j}\isnorm(0,\sigma)$; we choose $\sigma = 10^{-1}$ for the simulations in \cref{fig:multiple_states_svs}. 
The disorder is intended to avoid any special structure that might inadvertently close the spectral gap.
The Hamiltonian (\ref{eqn:lr-ising-hamiltonian}) is physically relevant because it is the coupling that drives the global entangling M\o{}lmer-S\o{}rensen gate used in ion-traps~\cite{sorenson2000entanglement}, and understanding deviations from the uniform case will help calibrate these gates. 

For our simulations of \cref{eqn:lr-ising-hamiltonian}, we prepare multiple eigenstates $\ket{E_i}$ for $i=1,...,N$, and measure each corresponding $K_i$.
Fig.~\ref{fig:multiple_states_svs} shows the singular spectrum of $A$ for $N=1,2$, or $4$ eigenstate preparations.
Firstly, we can see that for a single eigenstate there is a highly degenerate ground space, meaning there are many Hamiltonians that share this preparation as an eigenstate.
Therefore, the $N=1$ estimator will fail in this case of a nonlocal Hamiltonian, even in the presence of disorder. 
However, in \cref{fig:multiple_states_svs}(a) we see that with the addition of a single eigenstate this degeneracy is lifted and, up to numerical precision, $\dim \ker A = 1$. 
This means that the reconstruction is now unique: i.e., there exists only one $2$-body Hamiltonian that has both of those preparations as eigenstates.
\Cref{fig:multiple_states_svs}(b) shows how this holds, even for random Hamiltonians; the spectral gap improves even further with additional eigenstates.

\subsection{Using multiple control fields}\label{section:multiple_controls}

Now suppose that we also wish to utilize and characterize several additional control fields. 
We will expand these control fields in the same basis $\{P_j\}$ as before, so that for the control field $V_i$ we have
\begin{align}
    V_i = \sum_{j=1}^m v_{i,j} P_j\,,
\end{align}
where $v_i$ is the vector of coupling constants. 
In the presence of the control field the total Hamiltonian is given by
\begin{align}
    H_i := H_0 + V_i = \sum_{j=1}^m \bigl(c_j + v_{i,j}\bigr) P_j\,,
\end{align}
where $H_0$ is the bare Hamiltonian in the absence of any controls. 

Preparing a steady state of $H_i$ means that the correlation matrix $K$ has a kernel of $c+v_i$. 
Since we seek to estimate the $v_i$ as well, we can incorporate these additional variables and constraints again into a larger matrix (again called $A$) and a longer list of variables (again called $x$), given by
\begin{align}
\label{eqn:A-mult-controls}
A := \begin{bmatrix}
K_0&0&0 &\ldots&0\\
K_1&K_1&0 &\ &\ \\
K_2&0&K_2 &\ &\ \\
\vdots&\ &\ &\ddots &\vdots \\		
K_N&0&\ &\ldots&K_N \\		
\end{bmatrix}, \ \ x:=\begin{bmatrix}
c\\
v_1\\
\vdots\\
v_N\\		
\end{bmatrix}.
\end{align}
Furthermore, it is clear that using multiple control settings and multiple input states are compatible with one other, as any extra input states for each control field can again be stacked vertically onto the matrix $A$. 

Unfortunately, even if the matrices $K_j$ have a unique kernel, it is not true that the kernel of $A$ in \cref{eqn:A-mult-controls} is unique. 
If we have $N \ge 0$ separate control fields, then by a rank-counting argument there will be at least $N+1$ independent consistent solutions. 
It seems we have made an already ill-posed problem \emph{worse} by introducing the control fields! 

One possible solution is to add additional state preparations. 
In the generic case, adding $\ell$ extra state preparations will suffice to break the degeneracy, where $\ell = \lceil N/(m-1)\rceil$, which again follows from a rank-counting argument. 
However, even without these extra state preparations there is still utility in this strategy. 

To avoid the inconvenience of multiple state preparations, we can instead assume that one has access to control fields that are already well characterized. 
This is often the case in practice, such as when a well-characterized single-qubit gate is already known, but one wishes to characterize a two-qubit gate. 
Then the prior information about the well-characterized control fields serves to pin down a preferred solution within the $(N+1)$-fold degenerate space. 
To see how this works in more detail, we must first introduce our Bayesian model to properly account for this prior information.

\subsection{Bayesian model}\label{section:bayesian-model}

The models presented in the previous sections all require the inference of a vector in the kernel of an operator $ A $.
In this section we will construct a Bayesian method for this task, \method{}, giving us the ability to leverage prior information for greater robustness to noise as well as providing us with accurate quantification of the uncertainty in our estimates. 

One of the most important features of the Bayesian method is that it provides a point estimate for system couplings.
This is in contrast to prior works~\cite{bairey2019learning,bairey2019lindblad,qi2019determininglocal,hou2019determining,chertkov2018computational} that only resolve the system parameters up to a linear subspace by finding an approximate null vector.
As shown in \cref{fig:multiple_states_svs}, for physically relevant instances this subspace can even be larger than one dimensional, but even in the case of a unique kernel the prior methods failed to estimate the overall scale factor, and this had to be added by hand. 
\method{} eliminates this ambiguity. 

As with all prior related works~\cite{bairey2019learning,bairey2019lindblad,qi2019determininglocal,hou2019determining,chertkov2018computational}, our model will be $ Ax = 0 $ where our unknown $ x $ will be in the kernel of an operator $ A $.
We will call $A$ the \emph{forward operator}.

\begin{figure*}[t!]
	\centering
	\begin{tikzpicture}
	\node[inner sep=0pt] at (0,0) {\includegraphics[width=\textwidth]{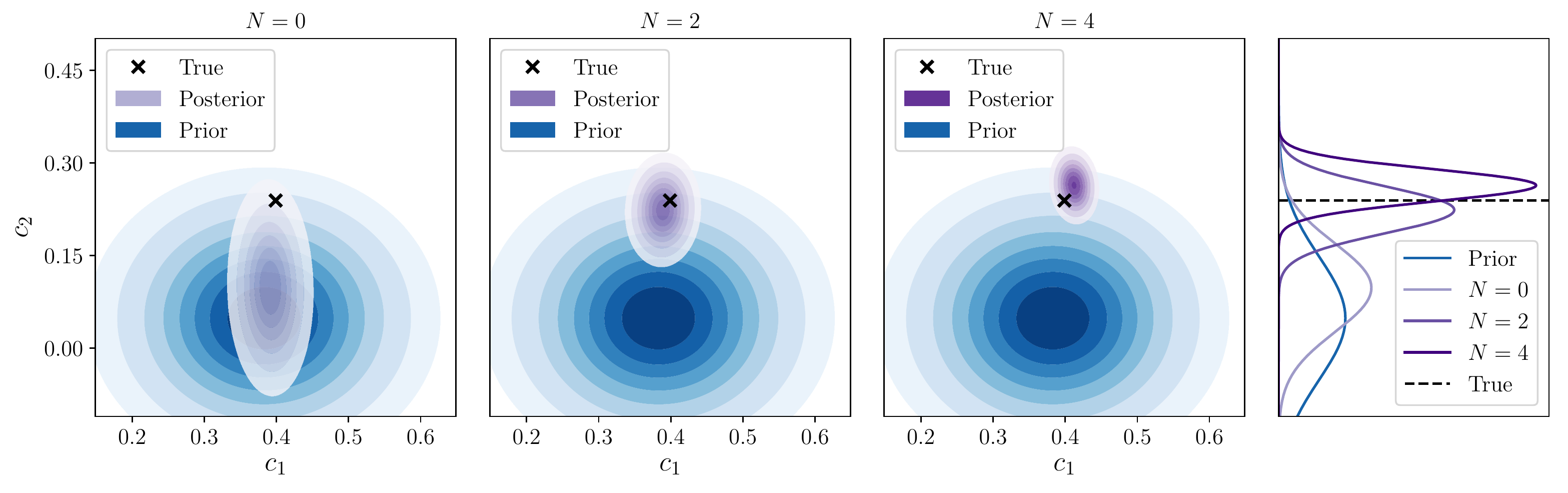}};
	\end{tikzpicture}	
	\caption{Here we plot the marginal prior and posterior distributions for the first two parameters, $ c_1,c_2 $, of a random 100 qubit spin chain Hamiltonian as more control fields are sequentially added ($ N=0,2,4 $). 
	The shaded areas represent the $2\sigma$ covariance ellipses. 
	It is clear that as more control fields are added the estimator rapidly converges towards the true values of the plotted parameters. 
	Just as importantly, the $2\sigma$ ellipse consistently contains the true values.}\label{fig:posterior}
\end{figure*}

The entries of $A$ are inherently uncertain because they can only be estimated by repeated experiments. 
To model the noise in the entries of $A$, we assume that each entry of the noisy forward operator $\tilde{A}$ is a random variable
\begin{align}
    \tilde{A}_{i,j} = A_{ij} + E_{ij}
\end{align}
where we chose the distribution to be normal with zero mean, $E_{ij}\isnorm(0,\sigma^2_{E_{ij}})$, though we note that a nonzero mean could be accommodated as well. 
A more realistic noise model would be for the $E_{ij}$ to be binomial random variables, as they are likely to be obtained by averaging experimental two-outcome measurements.
However, the Gaussian approximation will be useful theoretically for updating our prior information and will be a good approximation to the binomial case in the regime of interest. 

With this noise model in place for $\tilde{A}$, the resulting model becomes
\begin{align}
\label{eqn:approximate_model}
        \tilde{A}x + \epsilon = 0\,,
\end{align}
where we call the additive noise process $ \epsilon:=-Ex $ the \textit{approximation error}~\cite{kaipio2007statistical,kaipio2004computational,kaipio2013approximate}.
We call this an approximation error as it corresponds to an uncertainty in the operator $ A $.
Importantly, the noise on $\epsilon$ depends on the unknown $x$. 
Bayesian methods provide a natural framework for handling such errors, and correctly dealing with this state-dependent noise is a key contribution of our work.

We will assume that we have a Gaussian prior distribution for our coefficients $ x\isnorm(\bar{x},\Gamma_x) $ where $ \bar{x} $ is our best guess for the unknown coefficients and $ \Gamma_x $ is the covariance reflecting the prior uncertainty. 
In this case, as we show in Appendix \ref{appendix:bayesian}, we can model the conditional distribution $ \epsilon|x\isnorm(0,\Gamma_{\epsilon|x}) $, where
\begin{align*}
\label{eqn:approximation-error-covariance}
	\Gamma_{{\epsilon|x}_{k,l}}(x) &=\begin{cases} 
	\sigma_E^2 \left(\Tr \left[\Gamma_x\right] + \|x\|_2^2\right)\,, &k = l \\
	\sigma_E^2 \left(\Gamma_{x_{k,l}} + x_k x_l\right)\,,&k\neq l
	\end{cases}\,.\numberthis
\end{align*}
This expression for the covariance of the approximation error forms the basis of our Gaussian likelihood. 
It is easy to specialize this result to the cases of multiple input states or multiple control fields, and we do so in Appendix \ref{appendix:bayesian}. 
There are two important features of the noise in \cref{eqn:approximation-error-covariance}: it is \emph{not} independent of our unknown $x$, as noted above, and it is also correlated.

Because we have chosen a conjugate prior, we also have a Gaussian posterior $ x|A\isnorm(\mu_p,\Gamma_p) $. As shown in Appendix \ref{appendix:bayesian}, the mean and covariance of the posterior is given by
\begin{gather}
	\mu_p = \argmin{x} \|L_{\epsilon|x}(x) Ax\|_2^2 + \|L_x (x-\bar{x})\|_2^2,\label{eqn:posterior-MAP} \\
	\Gamma_p = \left(\Gamma_{x}^{-1} + A^\text{T} \Gamma_{\epsilon|x}^{-1} A \right)^{-1} \label{eqn:posterior-covariance}
\end{gather}
where $ L_{\epsilon|x},L_x $ are the Cholesky factors of $ \Gamma_{\epsilon|x}^{-1} $ and $ \Gamma_x^{-1} $ respectively, that is, $L\T L = \Gamma^{-1}$ for each respective pair $L$ and $\Gamma$. 
The estimate (\ref{eqn:posterior-MAP}) is the maximum \textit{a posteriori} (MAP) estimate and can be recognized as a generalized Tikhonov regularization.
The posterior covariance matrix \cref{eqn:posterior-covariance} gives a direct quantification of the uncertainty of the point estimate.

We can compare our estimate to the estimate obtained by taking the null space of the forward operator as in~\cite{qi2019determininglocal,bairey2019learning}.
This estimate can be written explicitly as 
\begin{align}
\label{eqn:bairey-mle}
\hat{c} = \pm\,\argmin{x} \|Ax\|_2\indent\mathrm{s.t.}\indent\|x\|=\|c\|.
\end{align}
As noted above, the normalisation constraint $\|x\|=\|c\|$ and the sign ambiguity $\pm$ both depend on the true state and will not be known in practice, so this estimate is unrealistically optimistic.
In the absence of the unrealistic norm constraint in \cref{eqn:bairey-mle}, this estimator coincides with \cref{eqn:posterior-MAP} only when we have no prior information and the approximation error is i.i.d.\ zero-mean Gaussian noise, which is never the case in practice. 
This highlights the advantage of using the \method{} estimator \cref{eqn:posterior-MAP}: by more careful computation of the statistics of the approximation error $ \epsilon $ and using prior information, the estimator \cref{eqn:posterior-MAP} avoids overfitting to the measurements.
In addition, the error quantification for our estimator is given to us directly from our posterior distribution in the form of the covariance in~\cref{eqn:posterior-covariance}, and this depends less heavily on the spectral gap than \cref{eqn:bairey-mle}.

One of the nice aspects of the Bayesian formalism is the ability to sequentially update the posterior.
This is a concept that has already been explored in the context of Hamiltonian learning~\cite{granade2012robustonline} and is likewise a good fit for \method{}.
The posterior distribution is updated having only partial data which will become the prior for the next posterior update when the data set is received.
\Cref{alg:online-learning} details the online procedure for \method{}.
\begin{algorithm}[H]
	\caption{\label{alg:online-learning} $\textsc{OBHL}(\bar{x}, \Gamma_x, A)$\\ Online \method{}}
	\begin{algorithmic}[1]
		\Require $\bar{x}, \Gamma_x, A$\\
		\# $ x\isnorm\left(\bar{x},\Gamma_x\right) $ is the initial prior distribution\\
		\# $A=\{A_1,\dots,A_N\}$ a list of $N$ measured submodels
		\For{$k=1,\dots,N$}
		\State $\bar{x}\gets \argmin{x} \|L_{\epsilon|x}(x) A_k x\|_2^2 + \|L_x (x-\bar{x})\|_2^2$
		\State $\Gamma_x\gets \left(\Gamma_{x}^{-1} + A_k^\text{T} \Gamma_{\epsilon|x}^{-1} A_k \right)^{-1}$ 
		\EndFor		
		\Ensure $(\bar{x},\Gamma_x)$
	\end{algorithmic}
\end{algorithm}

At each iteration the models $ A_k $ could be the relevant $K_i$ for different input states or an additional control setting.
Moreover, as shown in~\cite{bairey2019learning,bairey2019lindblad}, the system couplings defined on a subregion of the full system depend only on the measurements of that subregion alone.
This suggests an adaptive measurement scheme where the subsequent measurement of $ A_k $ are measurements of subregions of the full system, updating the posterior for those couplings only.

\begin{figure}[t!]
	\centering
	\begin{tikzpicture}
	\node[inner sep=0pt] at (0,0) {\includegraphics[width=\columnwidth]{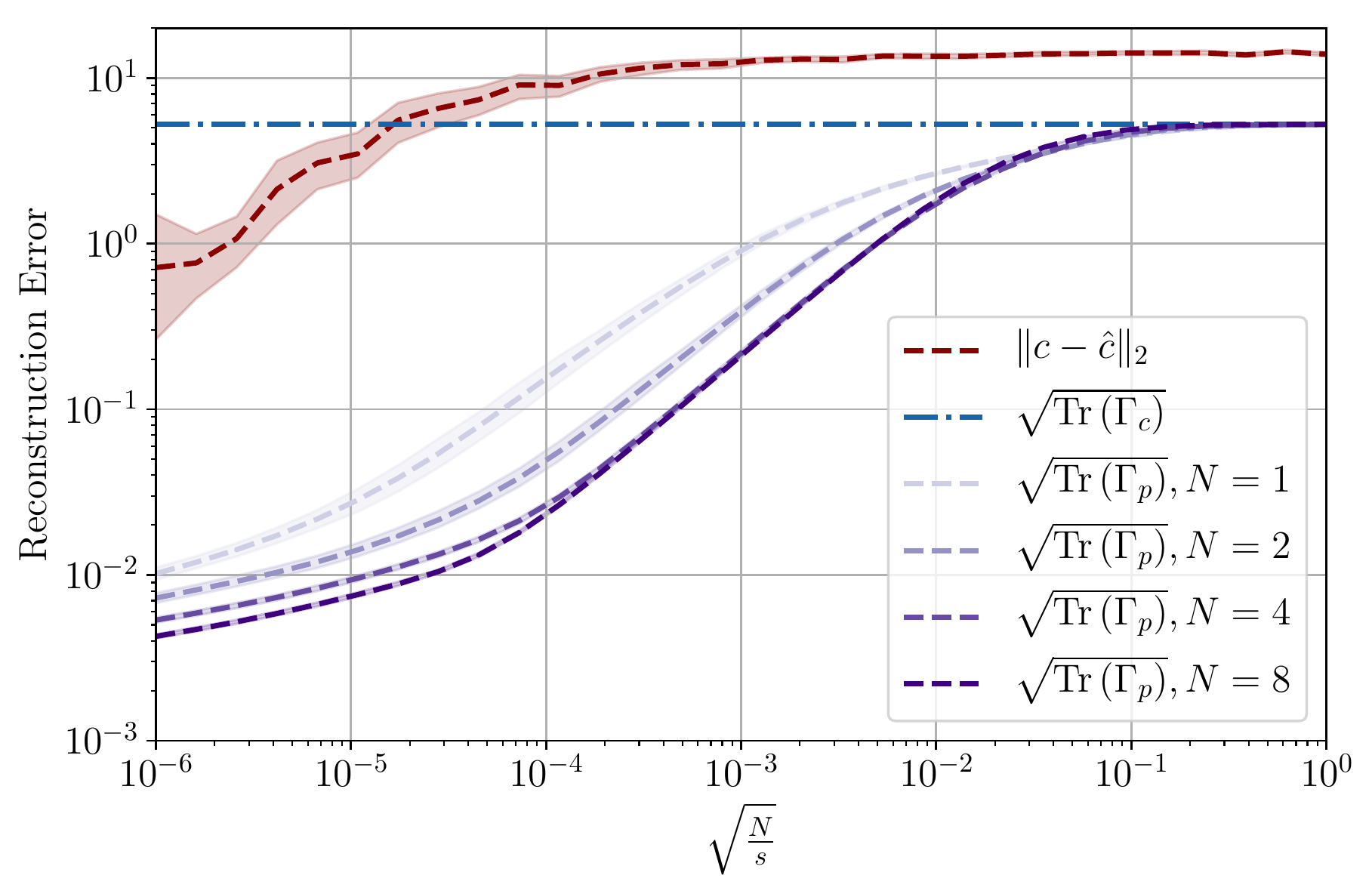}};
	\end{tikzpicture}
	\caption{Reconstruction error for different schemes as a function of measurement noise for a simulated ensemble of random 2-local Hamiltonians of 10-qubit spin chains. The red line corresponds to the estimator $ \hat{c} $ given by~\cref{eqn:bairey-mle}.
	With the \method{} plots, the Bayesian reconstruction uses a conservative prior standard deviation of $ \sigma_{c} = 0.5 $ and well calibrated control fields with $\sigma_{v_{i}} = 10^{-3}$. 
	The different error curves for additional control settings ($ N=1,2,4,8 $)  all have constant experimental cost; the experimental settings are sampled $s/N$ times.
	The Bayesian reconstruction error is always bounded above by the prior uncertainty.
	The decay in performance of \method{} for very small noise is due to an inaccuracy in the solver; theoretically the reconstruction error should saturate a $\sqrt{s/N}$ scaling according to central limit theorem rates.
    }\label{fig:error-ensemble-controls}
\end{figure}

\Cref{fig:posterior} shows the reconstruction from the MAP estimate for a 100-qubit random spin chain Hamiltonian. 
The full posterior is marginalised down to two parameters of the $1191$-dimensional full parameter space.
The couplings are known up to a prior accuracy of $\sigma_c = 10^{-1}$ and $N=4$ extra random control fields $v_i$ are sequentially added. 
These control fields are well characterised, with prior $\sigma_{v_i}=10^{-3}$. 
The Hamiltonians are represented using Matrix Product Operators (MPOs) and the corresponding reconstructions are computed using an implementation of the density matrix renormalization group (DMRG).
(For a review of MPOs and DMRG, see Ref.~\cite{bridgeman2017handwaving}.)
We can see that as the control fields are added, the posterior sequentially contracts around the true parameters. 
Moreover, the  $2$ standard deviation posterior ellipse shown consistently contains the true parameter at each iteration which is one of most appealing attributes of the Bayesian formalism.

In~\cref{fig:error-ensemble-controls} we show the reconstruction error for the case where we have multiple control fields. 
Shown are the expected reconstruction errors for an ensemble of random 10-qubit spin chain Hamiltonians. 
The expectation value for the reconstruction error $\expect{\|\hat{c} - c\|_2}$ for the posterior is given by $\sqrt{ \Tr\left[\Gamma_p\right] }$.
This reconstruction error is shown for an increasing number of $N$ control fields, with constant experimental complexity; the settings are sampled $s/N$ times so that statistical power is held fixed.
If our control fields $V_i$ are better characterised than the system we see that we get better reconstruction when our experimental resources are spread over a larger set of configurations (i.e. larger $N$).
However, there are diminishing returns for this procedure for large $N$ and the $N=8$ performs only slightly better than $N=4$.
Also shown is the expected error of the prior distribution which always gives an upper bound on the posterior accuracy.
For comparison, the estimator given by~\cref{eqn:bairey-mle} is given to show how \method{} adds robustness to noise.
Furthermore, unlike \method{}, any reconstruction using \cref{eqn:bairey-mle} needs to be normalised to match the exact 2-norm of the unknown which is not be possible in practice, hence the reconstruction error for \cref{eqn:bairey-mle} should be read as best-case. 
These results are for an additive error corresponding only to sampling statistics, however for \method{} to be of practical use we must consider additional relevant noise such as state preparation and measurement (SPAM) errors.

\subsection{Dealing with errors}\label{section:SPAM-errors}

In this section we will consider the three types of errors that plague any estimation scheme, describe our strategies for overcoming them, and quantify an upper bound on the experimental effort required to suppress them.
The errors we consider are state-preparation errors, measurements errors, and statistical errors from finite sampling. 

The strategy for dealing with measurement and statistical errors is simple: take more samples. 
However, our method will measure multiple weight-$k$ Pauli expectation values in parallel and takes time $O_k(\log m)$ to measure $m$ Paulis to fixed precision, beating the naive scaling of $O(m)$ and improving on recent work achieving $O_k\bigl[(\log m)^2\bigr]$~\cite{cotler2019}, where $O_k$ means the constant implied by the big-$O$ notation depends on $k$.

We deal with state preparation errors in two steps. Following Ref.~\cite{bairey2019learning}, we first use a time-averaged state to get our initial state closer to a steady state, and then we use quadrature rules to approximate this time average.

\subsubsection{Measurement errors}\label{section:measurement-errors}

Let us first consider measurement errors since these are the easiest to address. 
For a given measurement setting $ C_{jk} := i\left[P_j,P_k\right] $ we take $ s $ repeated measurements to estimate $ K_{jk} = \Tr\left(C_{ij} \rho \right) $.
However, these estimates will contain errors beyond those we obtain from finite averaging of measurement outcomes.
Let us consider the case that the $C_{jk}$ are qubit Pauli operators, and we assume that these are measured via two-outcome measurements.
Suppose that with some probability $e_{jk}$ we have a measurement error, meaning that we should have observed outcome $+1$ but instead we observe outcome $-1$ (or vice versa, symmetrically).
Subjected to this binary symmetric noise channel, the expectation value of repeated measurements becomes
\begin{align*}
K_{ij} = \left(1-2e_{ij}\right) \Tr\left(C_{ij} \rho\right).
\end{align*}

In the case where we have a uniform error rate $ e_{ij} = e $ for all $ i,j $ we obtain measurements $ K_\text{meas}=(1-2e)K $.
Therefore, since the kernel is invariant under scalar multiplication, our model is in fact inherently robust to such measurement errors, and only fluctuations between the $e_{ij}$ will contribute bias to the estimate.

That is not to say however that the estimator is invariant under measurement errors.
Measurement errors will cause the MAP estimate (\ref{eqn:posterior-MAP}) and posterior covariance (\ref{eqn:posterior-covariance}) to shift toward the prior as expected. 
Although there is no bias, more samples will be needed to achieve the same statistical resolution. 

The Bayesian methodology also allows us to incorporate a model of these measurement errors and account for them. 
Suppose that we have a statistical description of our measurement error rates $ e_{ij} $.
Then these measurement errors can also be corrected for by using the methods of Ref.~\cite{ferrie2012estimating}.
The corresponding uncertainty in this correction step can then replace the additive noise process $\epsilon$ (via~\cref{eqn:approximate_model}), supplying the relevant error $ E $ for the approximation error.

\begin{figure*}[tph]
	\centering
	\begin{tikzpicture}
	\node[inner sep=0pt] at (0,0) {\includegraphics[width=\textwidth]{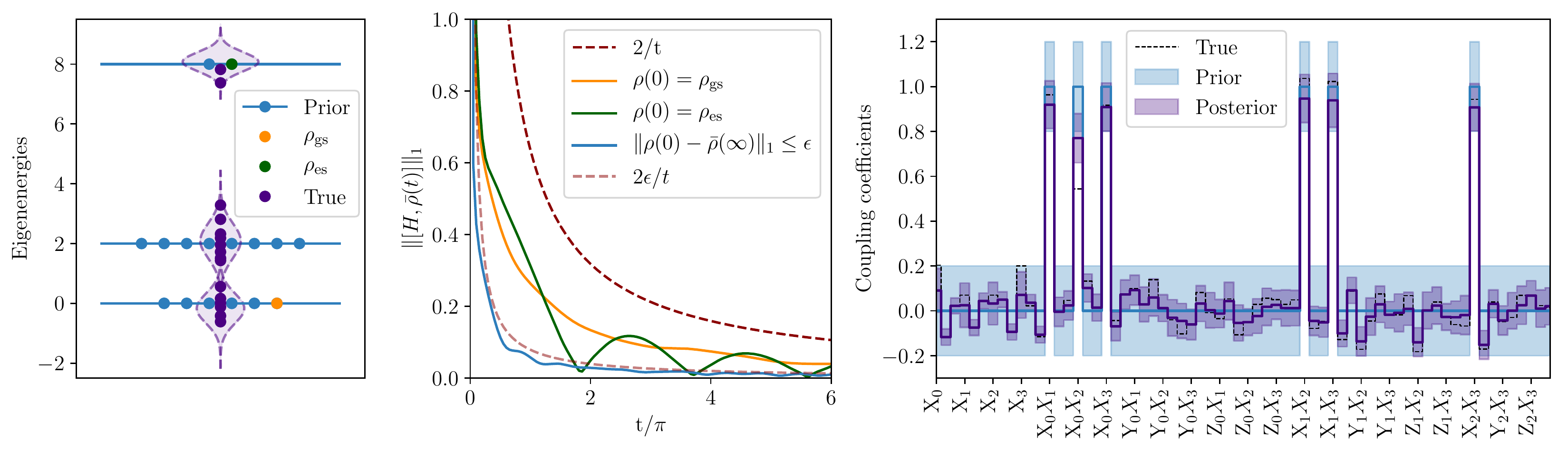}};
	\node[] at (-6.5,-3) {(a)};
	\node[] at (-1.6,-3) {(b)};	
	\node[] at (5,-3) {(c)};		
	\end{tikzpicture}
	\caption{a) The spectra for the prior mean which is centered at the Ising model in~\cref{eqn:lr-ising-hamiltonian} compared to the `true' Hamiltonian.
	The energy levels for the prior are all degenerate, and the small deviation from the ideal shows a lifting of these degeneracies. 
	The violin plots show the distribution of eigenvalues for Hamiltonians sampled from the prior.  
	b) Time-averaged input states approximately commute with the Hamiltonian. 
	By increasing the length of time we can improve the quality of the input state to the reconstruction algorithm. 
	Moreover, if we start with an initial state that is already close to commuting, such that $\bigl\|\rho(0)-\rhoav(\infty)\bigr\|_1 \le \epsilon$, then convergence is accelerated by a factor of $\epsilon$.
	c) The true $c$, prior $\bar{c}$ and posterior $\mu_p$ coupling coefficients are shown and the shaded regions are the $2\sigma$ error bars. 
	These results make use of 16 approximate time-averaged states as inputs.}\label{fig:ising-example}
\end{figure*}

\subsubsection{State preparation errors}\label{section:preparation-errors}

Preparation errors usually pertain to our inability to exactly prepare some target state. 
In our case we won't know \textit{a priori} the state we want to prepare (that would require knowing the Hamiltonian) so preparation errors will actually correspond to the use of an approximate input state. 
This means that it is imperative to model and include preparation errors into the noise model as these are likely to be a dominant noise source in practice.
It also means that we should appeal to the system for an appropriate preparation, rather than predefine it based on prior information.

An example of a system-defined preparation is the time-averaged state $ \rhoav(t) $, introduced in~\cite{bairey2019learning}, which will approximately commute with the Hamiltonian.
It is defined by
\begin{align}\label{eqn:time-averaged-state}
\rhoav(t) :=\frac{1}{t} \int_{0}^{t} \mathrm{d}u \rho(u).
\end{align}
The appeal of this state is that the dynamics of the system provides us with the state.

Following~\cite{bairey2019learning}, we show in \cref{appendix:approximate-state-preps} that a ``warm start'' initial state that is $\epsilon$-close in 1-norm to a steady state converges like $O(\epsilon/t)$ to a true steady state. 
Therefore we can use approximate steady state preparations and time averaging to systematically reduce our state preparation errors. 

A further difficulty, is that \cref{eqn:time-averaged-state} cannot be physically realized exactly. 
We must sample from times in the interval $[0,t]$ and average an ensemble of measurements over those times to approximate the statistics from $\rhoav(t)$. 
To deal with this additional source of error, which we call quadrature error, we sample from specific times $t_i \in (0,t)$ and produce a weighted average $\rhoav_s(t) = \sum_{i=1}^s w_i \rho(t_i)$. 
In \cref{appendix:approximate-state-preps} we show that our state $\rhoav_s(t)$ rapidly converges to the desired state $\rhoav(t)$ after $s$ samples, and satisfies the bound
\begin{align}
    \bigl\|\rhoav(t) - \rhoav_s(t)\bigr\|_1 
    \le \frac{\sqrt{\pi}}{4\sqrt{s}} \biggl(\frac{\mathrm{e}\|H\|t}{4s}\biggr)^{2s}\,.
\end{align}

\subsubsection{Sample complexity upper bound}\label{sec:samplecomplexitymaintext}

We can estimate the experimental cost of Hamiltonian learning using our scheme for state preparation via time averaging and quadrature together with our method for efficient estimation of Pauli expectation values which are both detailed in pseudocode in \cref{tab:pseudocode} in \cref{appendix:algorithms}. 

We have seen numerically that the Bayesian approach outperforms the naive approximate kernel estimate \cref{eqn:bairey-mle}. 
We would like to be able to properly bound the sample complexity of this estimator. 
However, the Bayesian model is difficult to analyze because of the role played by the prior information. 
We therefore give an analysis in \cref{appendix:sample-complexity} of the sample complexity for the naive kernel estimator (\cref{eqn:bairey-mle}) to obtain an infidelity of at most $\epsilon$. 
Here the infidelity (one minus the fidelity, $F$) between two vectors $a$ and $b$ is defined as
\begin{align}
    1-F(a,b) = 1- \frac{ |a\T b|^2}{\|a\|_2^2 \|b\|_2^2}\,,
\end{align}
and is a measure that is insensitive to an overall rescaling of $a$ or $b$ by nonzero real numbers. 

In \cref{appendix:sample-complexity}, we show that starting with any \textit{a priori} bound $\max_i |c_i| \le O(1)$, we can (with high probability) learn an estimate $\hat{c}$ such that $1-F(c,\hat{c}) \le \varepsilon$ using at most
\begin{align}
    N = \tilde{O}\biggl(\frac{m^3}{\varepsilon^{3/2} \Delta^{3/2}} 3^{2k}\biggr)
\end{align}
measurements, where $\Delta$ is the spectral gap of the (noisy) forward operator, $m$ is the length of $c(H)$, and $\tilde{O}$ means we are ignoring logarithmic terms. 
A more precise version of this claim is given in \cref{thm:bound} and \cref{cor:KE}. 

The main take away from this bound is that Hamiltonian estimation of $k$-body Hamiltonians can be done in polynomial time in the number of qubits, so long as the gap $\Delta$ is not too small.
We also stress these are recovery guarantees and provide only sufficient conditions for convergence. 
Using the Bayesian methodology of \method{} allows us to leverage the use of prior information and multiple state estimates to improve the estimation dramatically beyond what can be rigorously proven.

\subsection{Example: Long-range Ising Hamiltonian}\label{section:molmer-sorenson-example}

Consider again the long-range Ising model in~\cref{eqn:lr-ising-hamiltonian} and suppose we are trying to reconstruct this from multiple time-averaged states.
We fix the spin-chain length to be $n=4$ and the basis $ \{P_i\} $ to be all $2-$body Pauli operators.
(The small value of $n$ is chosen for visual clarity in \cref{fig:ising-example}.)
We set our prior mean to be centred at the Ising model with
\begin{align*}
\bar{c}_i = \begin{cases}
1,&P_i = X_jX_k\\
0,&\text{otherwise}
\end{cases}
\end{align*}
and we choose an i.i.d.\ prior with $ \sigma_{c} = 10^{-1} $.
The true Hamiltonian $H$ is then taken to be sampled from the prior distribution.

\Cref{fig:ising-example} a) shows the distribution of the spectra for Hamiltonians drawn from the prior, centred around the degenerate mean. 
Eigenstates of our prior are natural choices for initial states as they are potentially close to steady states, especially in the presence of more prior information.
\Cref{fig:ising-example} b) shows the decay of $\|H,\rhoav(t)\|_1$ over time for different initial states including two eigenstates of the prior $\rho_\mathrm{gs}$ and $\rho_\mathrm{es}$ defined by
\begin{align*}
	\rho_\mathrm{gs} =& \ket{0001} + \ket{0100} + \ket{1011} + \ket{1110}\,,\\
	\rho_\mathrm{es} =& \ket{0000} + \ket{0011} + \ket{0101} + \ket{0110} + \\
	&\ \ket{1001} + \ket{1010} + \ket{1100} + \ket{1111} \,,
\end{align*}
which are shown in yellow and green, respectively.

The time-average of both these states approaches a commuting state according to a $1/t$ scaling, however the time-averaged state corresponding to the initial state $ \rho_\mathrm{es}$, which we will denote $\rhoav_\mathrm{es}(t) $, vanishes periodically.
If we look at the spectrum in~\cref{fig:ising-example} a), we can see that $\rho_\mathrm{es}$ has large overlap with the two most excited states.
The frequency components $f_{ij}$ of the time evolution correspond to the energy spacings of the Hamiltonian, $ \vert E_i - E_j \vert = f_{ij} \hbar $ and high frequency components decay under averaging fastest, $\sim 1/(f_{ij} t)$.
Hence, the major frequency component that persists in $ \rhoav_\mathrm{es}(t)$ is due to the difference between the two largest eigenvalues $\vert E_i - E_j \vert\approx 0.5$ which gives a period of $T/\pi\approx 2$. 
Therefore, certain initial states will yield time averaged states that approximately commute at specific times according to $\rho(0) \approx \rho(t)$.

In this example we prepare the $16$ eigenstates of the prior distribution, and time-average them for $t=3\pi$.
In between each state, we update the posterior online according to~\cref{alg:online-learning}.
Most importantly, the approximation error for the use of time-averaged states is accounted for as detailed further in \cref{appendix:approximate-state-preps}.
\Cref{fig:ising-example} c) shows the true, prior and posterior couplings for the Hamiltonian.
We see that correctly handling the approximation error allows for the use of non-commuting states and \method{} will still yield robust error quantification.
The shaded areas show the marginal variances which correspond to the diagonal of the covariance $\Gamma_p$.
However, it should be noted that \method{} provides access to the full covariance matrix, including off-diagonal correlations in the unknowns.
There are 3 couplings, the $X_0$, $X_3$, and $X_0X_2$ terms, out of the full 66 that lie outside the posterior $2\sigma$ error bars.
These errors, however, are within what we expect from a 95\% credible region. 

\section{Discussion}\label{section:discussion}

We have introduced a new framework which allows for the efficient reconstruction of $k$-body Hamiltonians called \method{}.
Our method uses prior information of the type that is likely to be available to most experimentalists.
This prior information allows us to lift the estimator from a subspace containing the system coefficients to a bare point estimate, and includes a natural quantification of the uncertainty in the form of a posterior covariance matrix.
We propose two new extended models that can be used singly or together to improve the stability and accuracy of the model, one using multiple input states and the other making use of well calibrated control fields.
We also present an algorithm~(\cref{alg:online-learning}) for performing \method{} online and suggest one way in which measurements could be performed adaptively to maximise estimation accuracy.

The major contribution that \method{} brings to Hamiltonian learning is a rigorous Bayesian framework for handling the numerous approximation errors inherent in these correlation matrix models.
We show how these errors can be systematically added to the likelihood of our model in order to prevent overfitting and give concrete examples.
Most importantly \method{} provides robust uncertainty quantification in the inference of the system parameters.

We have furthermore introduced an efficient method for approximate preparation of time-averaged steady states and an efficient method for estimating the expectation values of $m$ low-weight Pauli operators in $O(\log m)$ time. 
These two subroutines have enabled us to give a rigorous upper bound on the required sample complexity for Hamiltonian estimation, and this complexity is polynomial in $n$ for $k$-body Hamiltonians whenever the spectral gap of $K$ is at least $1/\textrm{poly}(n)$.

There are many avenues for future work. 
One obvious step is to generalize \method{} to handle Lindbladian estimation~\cite{bairey2019lindblad}. 
Another is to consider Hamiltonians for systems other than qubits. 
In particular, it would be interesting to generalize these methods to Hamiltonians and Lindbladians that are unbounded operators such as systems comprised of coupled oscillators. 

It is unclear how tight the sample complexity bounds derived here are, or if they can be significantly improved by either a better analysis of the existing algorithms or by better estimations schemes. 
In particular, it would be interesting to directly analyze \method.
It would also be interesting to find lower bounds on the sample complexity of Hamiltonian estimation. 

Finally, perhaps the most interesting direction for future work is to demonstrate the usefulness of \method{} in a real experiment.

\begin{acknowledgements}
The authors would like to thank Arne Grimsmo, Kamil Korzekwa and Phillipp Schindler for insightful discussions.
This work was supported by the Australian Research Council via EQuS project number CE170100009 and by the US Army Research Office grant numbers W911NF-14-1-0098 and W911NF-14-1-0103.
\end{acknowledgements}

\appendix
\setcounter{equation}{0}
\setcounter{figure}{0}

\section{Bayesian Preliminaries}\label{appendix:bayesian}

\subsection{Construction of the likelihood}\label{appendix:construction-likelihood}

We will begin by constructing the likelihood for our problem.
Recall that in the ideal noiseless case we have a true operator $ A $ such that $ Ax=0  $.
However, we only have access to the measured approximation $ \tilde{A} $ where they differ by some additive noise matrix $ E = A - \tilde{A} $.
This means that the error we incur in our true model is given by
\begin{align}
    \tilde{A}x &= Ax + Ex
\end{align}
leaving us with an additive noise process $ \epsilon:=-Ex $ in our model
\begin{align}
    \tilde{A}x + \epsilon &= 0
\end{align}
which we call the \textit{approximation error}~\cite{kaipio2007statistical}.

We begin with the assumption that the distribution $ \pi\left(A,x,\epsilon\right) $ is jointly normal.
Then through repeated use of Bayes' theorem,
\begin{align}
\begin{split}
\label{eqn:repeated-bayes}
    \pi\left(A,x,\epsilon\right)&= \pi\left(A|x,\epsilon\right)\pi\left(\epsilon|x\right)\pi\left(x\right)\\
    &= \pi\left(A,\epsilon|x\right)\pi\left(x\right) 
\end{split}
\end{align}
and also using the fact that
\begin{align}
\label{eqn:dirac-distribution}
    \pi\left(A|x,\epsilon\right) &= \delta\left(-Ax-\epsilon\right) 
\end{align}
we can marginalise over our approximation error $ \epsilon $ to yield our likelihood
\begin{align}
\label{eqn:likelihood}
    \pi\left(A|x\right) &= \int \pi\left(A,\epsilon|x\right)d\epsilon\nonumber\\
    &= \int \pi\left(A|x,\epsilon\right) \pi\left(\epsilon|x\right) d\epsilon, &\bigl[\text{by~(\ref{eqn:repeated-bayes})}\bigr]\nonumber\\	
    &= \int \delta\left(-Ax-\epsilon\right) \pi\left(\epsilon|x\right) d\epsilon, &\bigl[\text{by~(\ref{eqn:dirac-distribution})}\bigr]\nonumber\\	
    &= \pi_{\epsilon|x}\left(-Ax | x\right)\,.
\end{align}
Here $ \pi_{\epsilon|x}\left(-Ax | x\right) $ is the distribution of $ \epsilon|x $ \textit{evaluated} at the point $ \epsilon = -Ax $.
In the usual construction of the likelihood for such a linear model, we assume that our additive noise process and the unknown are mutually independent, in which case conditioning on $ x $ makes no difference and we are simply left with the likelihood depending on the distribution of $ \epsilon $.
However, our approximation error is $ \epsilon=Ex $, therefore we cannot make this assumption and have to carry the extra baggage of the conditional distribution in our computations.

\subsection{Gaussian Posterior}

In this section we will derive our MAP estimator from a conjugate Gaussian prior. 
In general, the computation of a posterior cannot be tractably computed, and for small parameter spaces one must often resort to sampling methods such as Markov chain Monte Carlo (MCMC)~\cite{kaipio2004computational}. 
However, for certain combinations of likelihoods and prior distributions the posterior will be exactly computable, a situation known as a \emph{conjugate prior}. 
The conjugate prior for a Gaussian likelihood is a Gaussian, and we have shown in the preceding section that our likelihood is Gaussian. 
Now we will compute the posterior distribution under a Gaussian prior, $x\isnorm(\bar{x},\Gamma_x)$, where
\begin{align}
    \pi(x|A)\propto&\pi(A|x)\pi(x)\nonumber\\
    \propto&\exp\biggl(-\frac{1}{2}(-Ax)^\mathrm{T} \Gamma_{\epsilon|x}^{-1} (-Ax) \biggr)\nonumber\\
    & \ \times \exp\biggl(-\frac{1}{2}(x-\bar{x})^\mathrm{T}\Gamma_{x}^{-1}(x-\bar{x}) \biggr)\nonumber\\
\begin{split}
    =&\exp\Biggl(-\frac{1}{2}\biggl( (-Ax)^\mathrm{T} \Gamma_{\epsilon|x}^{-1} (-Ax) +\dots \\
    &\indent(x-\bar{x})^\mathrm{T}\Gamma_{x}^{-1}(x-\bar{x}) \biggr) \Biggr)\,.
\end{split}
\end{align}
Now if we take a Cholesky decomposition of the positive definite covariance matrices $L_{\epsilon|x}^\mathrm{T}L_{\epsilon|x} = \Gamma_{\epsilon|x}^{-1}$ and $L_{x}^\mathrm{T}L_{x} = \Gamma_{x}^{-1}$ we find
\begin{align}
\label{eqn:posterior-potential}
    \pi(x|A)\propto&\exp\left(-\frac{1}{2}\|L_{\epsilon|x}(x) Ax\|_2^2 -\frac{1}{2} \|L_x (x-\bar{x})\|_2^2 \right)\,.
\end{align}
The posterior mean is then given by
\begin{align}
    \mu_p = \argmin{x} \|L_{\epsilon|x}(x) Ax\|_2^2 + \|L_x (x-\bar{x})\|_2^2.
\end{align}
Because the posterior is Gaussian we can compare \cref{eqn:posterior-potential} with a standard multivariate distribution to solve for the covariance,
\begin{align}
    \Gamma_p = \left(\Gamma_{x}^{-1} + A^\text{T} \Gamma_{\epsilon|x}^{-1} A \right)^{-1}.
\end{align}

\subsection{Distribution of approximation error}

We need to determine the statistics of the conditional distribution $\epsilon|x $ in order to be able to evaluate our likelihood in~(\ref{eqn:likelihood}).
First we will need to compute the statistics of $ \epsilon $.
Given a simple noise model corresponding to the averaging of measurement outcomes from $ s $ samples, we can assume we have zero-mean Gaussian noise in the entries of $ A $ given by $ E_{i,j} = E_{j,i} \isnorm\left(0,\sigma_E^2\right) $ where $ \sigma_E^2\approx s^{-1}$ due to the central limit theorem. 

The mean of the approximation error, using the independence of $ x $ and $ E $, is 
\begin{align}
    \bar{\epsilon} = \mathbb{E}\left[Ex\right] = \mathbb{E}\left[E\right]\mathbb{E}\left[x\right] = 0.
\end{align}
Now consider
\begin{align*}
\Gamma_\epsilon &= \mathbb{E}\left[(\epsilon-\bar{\epsilon})(\epsilon-\bar{\epsilon})\T\right]\\
&= \mathbb{E}\left[\epsilon\epsilon\T\right]\\
&= \mathbb{E}\left[Exx\T E\T\right]
\end{align*}
Let $ e_k,e_l $ be rows of $ E $. 
Then the entries of $ \Gamma_\epsilon $ are
\begin{align*}
\Gamma_{\epsilon_{k,l}} &= \mathbb{E}\left[e_k x x\T e_l\T\right]\\
&=\sum_{i,j} \expect{e_{k,i}e_{l,j}x_i x_j }
\end{align*}
and given
\begin{align*}
\expect{e_{k,i}e_{l,j} }&=\begin{cases}
\sigma_E^2&k=l,i=j\\
\sigma_E^2&k\neq l,i=l,j=k\\
0&\text{otherwise}\,,
\end{cases}
\end{align*}
hence we have
\begin{align*}
\Gamma_{\epsilon_{k,l}}(x) &= \begin{cases}
\sigma_E^2 \expect{x\T x} &k=l\\
\sigma_E^2 \expect{x_k x_l} & k\neq l\\
\end{cases}\numberthis\label{eqn:useful-gammae}\\ 
&= \begin{cases}
\sigma_E^2 \left(\Tr\left[\Gamma_x\right] + \|\bar{x}\|_2^2\right) &k=l\\
\sigma_E^2 \left(\Gamma_{x_{k,l}} + \bar{x}_k\bar{x}_k\right) & k\neq l\,.
\end{cases}\numberthis\label{eqn:prior-gammae}
\end{align*}
Now we can use the standard Schur complement computation of the conditional Gaussian distribution~\cite{kaipio2004computational} to determine our likelihood.
Let $ \epsilon|x\isnorm(\mu_{\epsilon|x},\Gamma_{\epsilon|x}) $ then
\begin{align}
\mu_{\epsilon|x} &= \bar{\epsilon} + \Gamma_{\epsilon x} \Gamma_x^{-1} (x-\bar{x})\\
\Gamma_{\epsilon|x} &= \Gamma_\epsilon + \Gamma_{\epsilon x} \Gamma_{x}^{-1} \Gamma_{x \epsilon}\,.
\end{align}
However, we have
\begin{align}
\begin{split}
\Gamma_{\epsilon x} &= \expect{\left(\epsilon-\bar{\epsilon}\right)\left(x-\bar{x}\right)\T}\\
&= \expect{E\left(x-\bar{x}\right)\left(x-\bar{x}\right)\T}\\
&= \expect{E}\Gamma_x\\
&= 0.
\end{split}
\end{align} 
This means that we obtain the conditional distribution $ \epsilon|x\isnorm(0,\Gamma_{\epsilon|x}) $ where
\begin{align}\label{eqn:conditional-covariance}
\Gamma_{\epsilon|x} &= \Gamma_{\epsilon}
\end{align}
as defined in \cref{eqn:prior-gammae}.

We can now provide specific Bayesian formulations for the two models eqs.~(\ref{eqn:A-multiple-states}) and (\ref{eqn:A-mult-controls}). 
First let us consider the model (\ref{eqn:A-multiple-states}) for the case of multiple input states.
Our prior statistics are simply $\bar{x} = \bar{c},\ \Gamma_x = \Gamma_c$ and our noise covariance  is given by
\begin{align}
\Gamma_{\epsilon|x} := \begin{bmatrix}
	\Gamma_{\epsilon_1|c}& &\ldots&0\\
	\ &\Gamma_{\epsilon_2|c}&\ &\ \\
	\vdots&\ &\ddots &\vdots \\		
	0&\ &\ldots &\Gamma_{\epsilon_N|c}\\	
	\end{bmatrix}.    
\end{align}

Next, the model for multiple control fields will have a joint prior distribution given by
\begin{align*}
\bar{x}:=\begin{bmatrix}
\bar{c}\\
\bar{v}_1\\
\vdots\\
\bar{v}_N\\		
\end{bmatrix}\text{ and }
\Gamma_x := \begin{bmatrix}
\Gamma_c& &\ldots&0\\
\ &\Gamma_{v_1}&\ &\ \\
\vdots&\ &\ddots &\vdots \\		
0&\ &\ldots &\Gamma_{v_{N}} \\		
\end{bmatrix}.\numberthis \label{eqn:multiplecontrolmodel}
\end{align*}
The noise $\epsilon$ will be distributed as $ \epsilon\isnorm(0,\Gamma_\epsilon) $ where, similar to above, the covariance $ \Gamma_\epsilon $ is the block diagonal matrix 
\begin{align*}
\Gamma_{\epsilon|x} := \begin{bmatrix}
\Gamma_{\epsilon_0|c}& &\ldots&0\\
\ &\Gamma_{\epsilon_1|c,v_1}&\ &\ \\
\vdots&\ &\ddots &\vdots \\		
0&\ &\ldots &\Gamma_{\epsilon_N|c,v_{N}} \\		
\end{bmatrix}.
\end{align*}
The conditional covariance is given by
\begin{align}
\Gamma_{{\epsilon_i|c,v_i}_{k,l}} &=\begin{cases} 
\sigma_E^2 \bigl(\Tr\left[\Gamma_c \right] + \|c\|_2^2+\\
\ \ldots\Tr\left[\Gamma_v \right] + \|v_i\|_2^2\bigr)\,, & k = l \\
\sigma_E^2 \bigl(\Gamma_{c_{k,l}} + c_k c_l\\
\ \ldots \Gamma_{v_{i_{k,l}}} + v_{i_{k}} v_{i_{l}}\bigr)\,, & k\neq l.
\end{cases}
\end{align}

\section{Approximate state preparations}\label{appendix:approximate-state-preps}

Here we show that the vector $c(H)$ of the couplings of the unknown Hamiltonian $H$ is still approximately in the kernel of a matrix $K$ when $K = K(\rho)$ is assumed only to come from a \emph{$\delta$-approximate steady state}.
By this we mean a state having a small commutator with $H$, $\|[\rho,H]\|_1\le \delta$.
Such states can be prepared using the time averaging argument first described in Ref.~\cite{bairey2019learning}. 
Our first result is a slight refinement of a similar result in~\cite{bairey2019learning}, and is stated precisely in \cref{lemma:preparation-error}.
In \cref{thm:bound} below, we will derive a bound on the error for the kernel estimator due to these approximate state preparations. 

\begin{lemma}
\label{lemma:preparation-error}
Let $\rho$ be a steady state of $H$ and suppose $\|\rho(0)-\rho\|_1 \le \epsilon$ for some $\rho(0)$. 
Then the time-averaged state $\rhoav(t)$ satisfies the inequality
\begin{align}
    \bigl\|[H,\rhoav(t)] \bigr\|_1 \le \frac{2\epsilon}{t}\,.
\end{align}
\end{lemma}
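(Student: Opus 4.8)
The plan is to rewrite the commutator of $H$ with the time-averaged state as a total derivative and then apply the fundamental theorem of calculus. First I would recall that the time-evolved state $\rho(u) = e^{-iHu}\rho(0)e^{iHu}$ satisfies the von Neumann equation $\tfrac{d}{du}\rho(u) = -i[H,\rho(u)]$, so that $[H,\rho(u)] = i\,\tfrac{d}{du}\rho(u)$. Substituting into the definition $\rhoav(t) = \tfrac1t\int_0^t \rho(u)\,du$ from \cref{eqn:time-averaged-state} and moving the (bounded, linear) map $[H,\,\cdot\,]$ inside the integral over $u$ gives
\begin{align*}
[H,\rhoav(t)] = \frac{1}{t}\int_0^t [H,\rho(u)]\,du = \frac{i}{t}\int_0^t \frac{d}{du}\rho(u)\,du = \frac{i}{t}\bigl(\rho(t)-\rho(0)\bigr)\,.
\end{align*}

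Next I would bound $\bigl\|\rho(t)-\rho(0)\bigr\|_1$ by inserting the exact steady state $\rho$ and using the triangle inequality, $\bigl\|\rho(t)-\rho(0)\bigr\|_1 \le \bigl\|\rho(t)-\rho\bigr\|_1 + \bigl\|\rho-\rho(0)\bigr\|_1$. The second term is at most $\epsilon$ by hypothesis. For the first term, the key point is that $[H,\rho]=0$ implies $e^{-iHt}\rho\, e^{iHt} = \rho$, hence $\rho(t)-\rho = e^{-iHt}\bigl(\rho(0)-\rho\bigr)e^{iHt}$; since the trace norm is invariant under conjugation by unitaries, $\bigl\|\rho(t)-\rho\bigr\|_1 = \bigl\|\rho(0)-\rho\bigr\|_1 \le \epsilon$. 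Combining these, $\bigl\|[H,\rhoav(t)]\bigr\|_1 = \tfrac1t\bigl\|\rho(t)-\rho(0)\bigr\|_1 \le 2\epsilon/t$, which is the claim.

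There is no genuine obstacle here; the only points requiring a little care are exchanging the commutator with the integral over $u$ — valid because $[H,\,\cdot\,]$ is a bounded linear operator on the (finite-dimensional) space of operators — and the unitary invariance of the trace norm, both standard facts. The ``slight refinement'' over Ref.~\cite{bairey2019learning} is precisely that tracking the $\epsilon$-closeness of $\rho(0)$ to a steady state upgrades the crude bound $\bigl\|\rho(t)-\rho(0)\bigr\|_1 \le 2$ to $\bigl\|\rho(t)-\rho(0)\bigr\|_1 \le 2\epsilon$, giving the extra factor of $\epsilon$ in the convergence rate.
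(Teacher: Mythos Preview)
Your proof is correct and follows essentially the same route as the paper: both use the von Neumann equation and the fundamental theorem of calculus to obtain $\bigl\|[H,\rhoav(t)]\bigr\|_1 = \tfrac{1}{t}\|\rho(t)-\rho(0)\|_1$, then insert the steady state $\rho$ via the triangle inequality and use unitary invariance of the trace norm to bound each piece by $\epsilon$. The only cosmetic difference is that the paper also records the auxiliary fact $\|\rhoav(t)-\rho\|_1 \le \epsilon$ along the way, which you (correctly) omit since it is not needed for the stated inequality.
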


\begin{proof}
We define the time-averaged state under the evolution of $H$,
\begin{align}
\label{eq:timeavgrho}
    \rhoav(t) := \frac{1}{t} \int_0^t \mathrm{d}u \, \rho(t) = \frac{1}{t} \int_0^t \mathrm{d}u \, \mathrm{e}^{-iHu} \rho(0) \mathrm{e}^{iHu}\,.
\end{align}
This time-averaged state is also always close to a steady state, as follows. 
Using the triangle inequality and the unitary invariance of the norm, we have
\begin{align}\begin{split}
        \|\rhoav(t)-\rho\|_1 & = \left\|\frac{1}{t} \int_0^t \mathrm{d}u \, \mathrm{e}^{-iHu} \bigl(\rho(0) - \rho\bigr) \mathrm{e}^{iHu}\right\|_1\\
    &\le \frac{1}{t} \int_0^t \mathrm{d}u \left\| \mathrm{e}^{-iHu} \bigl(\rho(0) - \rho\bigr) \mathrm{e}^{iHu}\right\|_1\\
    &= \frac{1}{t} \int_0^t \mathrm{d}u \|\rho(0) - \rho\|_1\\
    & \le \epsilon\,.
\end{split}
\end{align}

Next we show that the commutator of the time-averaged state with the Hamiltonian is decreasing with time. 
Using the equations of motion and the fundamental theorem of calculus, we have
\begin{align}
\begin{split}
    \|[H,\rhoav(t)] \|_1 &= \frac{1}{t}\|\rho(0) - \rho(t)\|_1\\
	&=\frac{1}{t}\|\left(\rho(0) - \rho\right) + \left(\rho - \rho(t)\right)\|_1\,.
\end{split}
\end{align}
Then again using the triangle inequality and the unitary invariance of the norm we have
\begin{align}
\begin{split}
	\|[H,\rhoav(t)] \|_1&\leq\frac{\epsilon}{t} + \frac{1}{t}\|\rho-\rho(t)\|_1\\
	&\leq\frac{\epsilon}{t} + \frac{1}{t}\|\rho-\rho(0)\|_1\\
	&\le \frac{2\epsilon}{t}
\end{split}
\end{align}
and the result is immediate.
\end{proof}

Although the time-averaged state becomes a better and better approximate state preparation, there is no physical way to prepare the exact state $\rhoav(t)$ for $t>0$.
An obvious approach to deal with this is to sample from a discrete set of intermediate times and average the values of the experiments at these sampled times.
Let us denote by $\rhoav_s(t)$ a discretely averaged density operator over a set of $s$ points $\{t_1\ldots,t_s\}$ in the interval $[0,t]$.
We will allow positive weights $w_i$ so that our approximate time-averaged state is given by
\begin{align}
\label{eq:rhoav-s}
    \rhoav(t) \approx \rhoav_s(t) := \sum_{i=1}^s w_i \rho(t_i)\,,
\end{align}
where $t_i = u_i t$ for some $w_i>0$ and $u_i \in (0,1)$ to be chosen later by \cref{eq:u,eq:weights}. 
The matrix elements of $K\bigl(\rhoav(t)\bigr)$ are given by
\begin{align}
\label{eq:K-rhoav}
    K\bigl(\rhoav(t)\bigr)_{jk} = \Tr\bigl(i[P_j,P_k]\rhoav(t)\bigr)\,,
\end{align}
which is linear in $\rhoav(t)$. 
This linearity is important because it means that a weighted average the results of experiments with the states $\rho(t_i)$ will have the same expected value as an experiment with the unphysical state $\rhoav(t)$. 

Our next result bounds the error in the 1-norm associated to approximating $\rhoav(t)$ from \cref{eq:timeavgrho} by $\rhoav_s(t)$ from \cref{eq:rhoav-s}. 
\begin{theorem}
\label{thm:quadrature}
There exist sets of weights $w_i>0$ and times $t_i\in [0,t]$, $\bigl\{(w_i,t_i)\bigr\}_{i=1}^s$, such that 
\begin{align}
\label{eq:quaderrorbound1norm}
    \bigl\|\rhoav(t) - \rhoav_s(t)\bigr\|_1 
    \le \frac{\sqrt{\pi}}{4\sqrt{s}} \biggl(\frac{\mathrm{e}\|H\|t}{4s}\biggr)^{2s}\,.
\end{align}
\end{theorem}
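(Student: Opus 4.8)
The plan is to take $\bigl\{(w_i,t_i)\bigr\}_{i=1}^s$ to be the $s$-point Gauss--Legendre quadrature rule for the interval $[0,t]$. Concretely, let $x_i\in(-1,1)$ and $\omega_i>0$ be the nodes and weights of the standard $s$-point Gauss--Legendre rule on $[-1,1]$, and set $u_i := (1+x_i)/2 \in (0,1)$, $w_i := \omega_i/2 > 0$, so that $t_i = u_i t \in (0,t)$ and $\sum_i w_i = 1$. Writing $\rho(u) = \mathrm{e}^{-iHu}\rho(0)\mathrm{e}^{iHu}$ and recalling that $\rhoav(t) = \frac{1}{t}\int_0^t \mathrm{d}u\,\rho(u)$, the state $\rhoav_s(t) = \sum_{i=1}^s w_i \rho(t_i)$ of \cref{eq:rhoav-s} is then exactly the quadrature approximation of $\rhoav(t)$, with the positive weights required by the statement. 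The task reduces to bounding $\bigl\|\rhoav(t)-\rhoav_s(t)\bigr\|_1$, which is $\tfrac1t$ times the $1$-norm of the Gauss quadrature error functional $E$ applied to the matrix-valued map $u\mapsto\rho(u)$.

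Next I would invoke the Peano-kernel representation of $E$. Since the $s$-point Gauss rule is exact on polynomials of degree $\le 2s-1$, for any $C^{2s}$ map $g\colon[0,t]\to\mathbb{C}^{d\times d}$ one has $E(g)=\int_0^t \kappa(u)\,g^{(2s)}(u)\,\mathrm{d}u$ with a fixed scalar Peano kernel $\kappa$ (applying the scalar identity entrywise). The Gauss kernel $\kappa$ has constant sign, so $\int_0^t|\kappa(u)|\,\mathrm{d}u = \bigl|E(p)\bigr|$ for $p(u)=u^{2s}/(2s)!$, and evaluating the classical scalar Gauss error formula on this polynomial gives $\int_0^t|\kappa(u)|\,\mathrm{d}u = \dfrac{t^{2s+1}(s!)^4}{(2s+1)\bigl[(2s)!\bigr]^3}=:C_s$. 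Hence $\bigl\|E(g)\bigr\|_1 \le \int_0^t|\kappa(u)|\,\bigl\|g^{(2s)}(u)\bigr\|_1\,\mathrm{d}u \le C_s \sup_{u\in[0,t]}\bigl\|g^{(2s)}(u)\bigr\|_1$. Taking $g=\rho$ and differentiating the equation of motion, $\rho^{(k)}(u) = (-i)^k\,\mathrm{ad}_H^{\,k}\bigl(\rho(u)\bigr)$ where $\mathrm{ad}_H(X):=[H,X]$; since $\|\mathrm{ad}_H(X)\|_1\le 2\|H\|\,\|X\|_1$ and $\|\rho(u)\|_1=1$, we get $\|\rho^{(2s)}(u)\|_1\le(2\|H\|)^{2s}$, so
\begin{align}
\bigl\|\rhoav(t)-\rhoav_s(t)\bigr\|_1 \le \frac{1}{t}\,C_s\,(2\|H\|)^{2s} = \frac{(s!)^4\,(2\|H\|t)^{2s}}{(2s+1)\bigl[(2s)!\bigr]^3}\,.
\end{align}

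The last step is to simplify this exact expression into the advertised form using two-sided Stirling estimates for $s!$ and $(2s)!$. With $s!\approx\sqrt{2\pi s}\,(s/\mathrm{e})^s$ and $(2s)!\approx\sqrt{4\pi s}\,(2s/\mathrm{e})^{2s}$ one finds $\dfrac{(s!)^4}{[(2s)!]^3}\approx\dfrac{\sqrt{\pi s}}{2}\cdot\dfrac{\mathrm{e}^{2s}}{2^{6s}s^{2s}}$, so the right-hand side becomes $\dfrac{\sqrt{\pi s}}{2(2s+1)}\Bigl(\dfrac{\mathrm{e}\|H\|t}{4s}\Bigr)^{2s}$, and the harmless bound $\dfrac{\sqrt{\pi s}}{2(2s+1)}\le\dfrac{\sqrt{\pi}}{4\sqrt{s}}$ gives the claim. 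I expect the main obstacle to be the middle step: the familiar single-node mean-value form $E(g)=C_s\,g^{(2s)}(\xi)$ is unavailable for a matrix-valued integrand, so one must route through the Peano kernel and use its constant sign to keep the \emph{sharp} constant $C_s$, checking that passing from the scalar kernel identity to the $1$-norm bound on $E(\rho)$ costs nothing dimension-dependent. Carrying the explicit constants through Stirling is routine but must use rigorous two-sided bounds so that every $\approx$ above becomes an honest $\le$.
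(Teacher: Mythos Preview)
Your proposal is correct and follows the same overall architecture as the paper: Gauss--Legendre quadrature on $[0,1]$ (equivalently $[0,t]$), the nested-commutator bound $\bigl\|\mathrm{ad}_H^{2s}\rho\bigr\|_1\le(2\|H\|)^{2s}$, and Stirling to simplify $\dfrac{(s!)^4(2\|H\|t)^{2s}}{(2s+1)\bigl[(2s)!\bigr]^3}$ into the stated form. The one genuine difference is how the matrix-valued integrand is handled. You work directly with $u\mapsto\rho(u)$ and invoke the Peano-kernel representation, using the constant sign of the Gauss kernel to keep the sharp constant $C_s$ without a mean-value theorem. The paper instead reduces to a \emph{scalar} integrand by setting $f(u)=\Tr\bigl(\Pi\,\rho(ut)\bigr)$ for an arbitrary $\Pi$ with $\|\Pi\|\le 1$, applies the textbook scalar error formula $|I-I_s|\le C_s\,|f^{(2s)}(\xi)|$ verbatim, bounds $|f^{(2s)}(\xi)|\le(2\|H\|t)^{2s}$, and then takes the supremum over $\Pi$ to recover the Schatten-$1$ norm via duality. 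The paper's route avoids the Peano-kernel machinery entirely and lets the single-node mean-value form do the work, at the cost of one extra line (the duality step); your route is a bit heavier but stays intrinsic to the operator-valued problem. Either way the constants match exactly.
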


\begin{proof}
We first change variables so that
\begin{align}
    \rhoav(t) = \frac{1}{t}\int_0^t \rho(u) \mathrm{d}u = \int_0^1 \rho\bigl(ut\bigr)\mathrm{d}u\,.
\end{align}
Now trace both sides against any operator $\Pi$ satisfying $\|\Pi\|\le 1$, and we can introduce a function
\begin{align}
\label{eq:f(u)}
    f(u) = \Tr\bigl(\Pi\, \rho(ut)\bigr)\,.
\end{align}
The function $f$ is very well behaved, and is an entire function in the complex plane since is obtained from compositions, sums and products of entire functions. 
In particular, it has well-defined derivatives at all orders.

We will use Gauss-Legendre quadrature to construct positive weights $w_i$ and points $u_i$ so that 
\begin{align}
    I := \int_0^1 f(u) \mathrm{d}u\, \approx\, I_s:=\sum_{i=1}^s w_i f(u_i)\,.
\end{align}
We will choose the evaluation points $u_i$ from the distinct roots of the $s$th order Legendre polynomial $\mathcal{P}_s(x)$, so that for $i=1,\ldots,s$ we have
\begin{align}
\label{eq:u}
    \mathcal{P}_s(2u_i-1) = 0\,.
\end{align}
The choice of weight for point $i$ is given by
\begin{align}
\label{eq:weights}
    w_i = \frac{4 u_i(1-u_i)}{(s+1)^2 \mathcal{P}_{s+1}(2u_i-1)^2}\,.
\end{align}
With these choices, an upper bound for the error of the quadrature rule $I_s$ is given by~\cite[\S5.2]{Kahaner1989}
\begin{align}
    \bigl|I-I_s\bigr| \le \frac{(s!)^4}{(2s+1)((2s)!)^3} |f^{(2s)}(\xi)|\,,
\end{align}
for some $\xi \in [0,1]$. 
By \cref{lemma:M-bound} below and $\|\Pi\|\le 1$, the term $|f^{(2s)}(\xi)|$ is bounded by 
\begin{align}
    |f^{(2s)}(\xi)| \le \bigl(2\|H\|t\bigr)^{2s}\,
\end{align}
and we find the following inequality,
\begin{align}
    |I-I_s| \le \frac{\bigl(2\|H\|t\bigr)^{2s}(s!)^4}{(2s+1)((2s)!)^3}\,.
\end{align}
The right hand side of \cref{eq:quaderrorbound1norm} is an upper bound on this after an elementary application of Stirling's formula.

This bound holds for any choice of $\Pi$, so in particular
\begin{align}
    |I-I_s| \le \max_{\|\Pi\| \le 1} |I-I_s| = \bigl\|\rhoav(t) - \rhoav_s(t)\bigr\|_1
\end{align}
and the result follows.
\end{proof}

We remark that, while we have used Gauss-Legendre quadrature to get a provable guarantee on the approximation error for the time-averaged matrix elements, it would be more suitable in practice to use a Gauss-Kronrod quadrature formula or other nested quadrature rule so that convergence can be checked online while reusing preexisting data points. 

We now prove the lemma used in the proof of \cref{thm:quadrature}.

\begin{lemma}
\label{lemma:M-bound}
For any operator $\Pi$, we have the bound
\begin{align}
    \sup_{u \in [0,1]}\, \biggl|\frac{\partial^k}{\partial u^k} \Tr\bigl(P\rho\bigl(ut\bigr)\bigr)\biggr|\le \|\Pi\| \bigl(2\|H\|t\bigr)^k\,.
\end{align}
\end{lemma}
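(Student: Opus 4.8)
The plan is to pass to the Heisenberg picture and observe that each $u$-derivative acts as the adjoint map $\mathrm{ad}_H(\cdot) := [H,\cdot]$ up to a factor of $it$; the bound then drops out of operator-norm submultiplicativity together with trace-norm/operator-norm duality. (I read the $P$ in the statement as the operator $\Pi$ appearing on the right-hand side and in the proof of \cref{thm:quadrature}.)

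First I would write $\rho(ut) = \mathrm{e}^{-iHut}\rho(0)\mathrm{e}^{iHut}$ and use cyclicity of the trace to move the evolution onto $\Pi$, setting $\Pi(ut) := \mathrm{e}^{iHut}\Pi\,\mathrm{e}^{-iHut}$ so that $\Tr\bigl(\Pi\rho(ut)\bigr) = \Tr\bigl(\Pi(ut)\rho(0)\bigr)$. Differentiating under the trace is legitimate since $\Pi(ut)$ is entire in $u$ (as already noted for $f$ in the proof of \cref{thm:quadrature}), and the Heisenberg equation of motion gives $\frac{\partial}{\partial u}\Pi(ut) = it\,[H,\Pi(ut)] = it\,\mathrm{e}^{iHut}\,\mathrm{ad}_H(\Pi)\,\mathrm{e}^{-iHut}$, using that $H$ commutes with $\mathrm{e}^{iHut}$. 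Iterating $k$ times yields
\begin{align*}
    \frac{\partial^k}{\partial u^k}\Tr\bigl(\Pi\rho(ut)\bigr) = (it)^k\,\Tr\Bigl(\mathrm{e}^{iHut}\,\mathrm{ad}_H^k(\Pi)\,\mathrm{e}^{-iHut}\,\rho(0)\Bigr)\,.
\end{align*}

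Next I would bound the right-hand side. By Hölder's inequality for Schatten norms, $|\Tr(XY)| \le \|X\|\,\|Y\|_1$; taking $Y=\rho(0)$ with $\|\rho(0)\|_1 = 1$ and using unitary invariance of the operator norm to discard the two exponentials gives $\bigl|\frac{\partial^k}{\partial u^k}\Tr(\Pi\rho(ut))\bigr| \le t^k\,\bigl\|\mathrm{ad}_H^k(\Pi)\bigr\|$. The last ingredient is the estimate $\|\mathrm{ad}_H(X)\| = \|[H,X]\| \le \|HX\| + \|XH\| \le 2\|H\|\,\|X\|$, i.e. $\mathrm{ad}_H$ has operator norm at most $2\|H\|$ as a superoperator; applying it $k$ times gives $\|\mathrm{ad}_H^k(\Pi)\| \le (2\|H\|)^k\|\Pi\|$. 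Substituting and taking the supremum over $u\in[0,1]$ (the resulting bound is already independent of $u$) gives the claim.

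There is no genuine obstacle here — the lemma is a bookkeeping step. The only point needing a moment of care is the constant: one must use the submultiplicative estimate $\|[H,X]\|\le 2\|H\|\,\|X\|$ rather than anything sharper or looser, since this is exactly what produces the factor $2\|H\|t$ that appears (squared) in \cref{thm:quadrature}. It is also worth noting that the statement is homogeneous under $\Pi\mapsto\lambda\Pi$, so it suffices to carry the single power of $\|\Pi\|$ through the estimates above.
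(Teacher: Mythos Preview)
Your argument is correct and is essentially the paper's proof rearranged: the paper differentiates $\rho(ut)$ in the Schr\"odinger picture, expands the $k$-fold nested commutator into $2^k$ terms of the form $\Tr(\Pi H^{k-\ell}\rho H^{\ell})$, and bounds each by $\|\Pi\|\,\|H\|^k$ via H\"older and submultiplicativity, whereas you push the evolution onto $\Pi$ and package the same estimate as $\|\mathrm{ad}_H^k(\Pi)\|\le (2\|H\|)^k\|\Pi\|$. The underlying ingredients---equation of motion, iterated commutators, H\"older with $\|\rho\|_1=1$, and $\|[H,X]\|\le 2\|H\|\,\|X\|$---are identical; your Heisenberg-picture organization is slightly tidier in that it avoids writing out the $2^k$ terms explicitly.
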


\begin{proof}
From the equation of motion and the chain rule, we have
\begin{align}
    \partial_u \rho\bigl(ut\bigr) = -i\bigl[H,\rho\bigl(ut\bigr)\bigr] t\,.
\end{align}
Therefore the $k$th derivative involves a $k$-nested commutator which will have $2^k$ terms, and there will be an overall factor of $t^k$. 
Expanding the nested commutators and using the triangle inequality, each term has $k$ factors of $H$ in it, and is of the form $|\Tr(P H^{k-\ell} \rho H^{\ell})|$.
By using the matrix H\"{o}lder inequality and the submultiplicativity of the norm, each term is less than $\|\Pi\| \|H\|^k \|\rho\|_1 \le \|\Pi\|\|H\|^k$.
Summing all of these terms and accounting for the overall factor from the chain rule gives the result. 
\end{proof}

This additive error on the estimates to the matrix elements of $K$ can be incorporated into the bounds derived in \cref{appendix:sample-complexity} for finite sampling. 

To see how approximate state preparations will affect our estimate, we define a measure of overlap between two subspaces spanned by vectors $a$ and $b$ given by the \emph{fidelity}, 
\begin{align}
    F(a,b) = \frac{ |a\T b|^2}{\|a\|_2^2 \|b\|_2^2}\,.
\end{align}
Note that this measure is canonical in the sense that it only depends on the subspace projectors and not on the specific choice of spanning vector within each subspace. 
That is, it is invariant under the rescalings $a\to \alpha a$ and $b \to \beta b$ for any nonzero $\alpha, \beta$. 
The fidelity is always bounded by $F\in[0,1]$, with $F=1$ if and only if $a=b$. 
Therefore $1-F$, called the \emph{infidelity}, is a sensible measure of error.

\begin{theorem}
\label{thm:bound}
Let $\rho$ be a $\delta$-approximate steady state for $H$, satisfying $\bigl\|[H,\rho]\bigr\|_1\le\delta$. 
Then the estimate $\hat{c}$ obtained from the least right singular vector of $K(\rho)$ obeys an error bound with $c=c(H)$ given by
\begin{align}
    1-F(c,\hat{c}) \le \frac{m\delta^2}{\Delta\|c\|_2^2}\,,
\end{align}
where $m = \dim(c)$ and $\Delta > 0$ is the spectral gap of $K(\rho)$. 
\end{theorem}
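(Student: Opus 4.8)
The plan is to treat $\hat c$ as a perturbation of $c$ whose size is controlled by $\|Kc\|_2$, which in turn is controlled by $\delta$. First I would show that $c$ is an \emph{approximate} null vector of $K=K(\rho)$. Since $\sum_k c_k P_k = H$, the $j$-th entry of $Kc$ is $\sum_k K_{jk} c_k = \Tr\bigl(i[P_j,H]\rho\bigr)$, and cycling the trace rewrites this as $\Tr\bigl(P_j\, i[H,\rho]\bigr)$. Applying the matrix H\"older inequality with the operator-norm/trace-norm pairing and $\|P_j\|_\infty=1$ gives $|(Kc)_j|\le \bigl\|[H,\rho]\bigr\|_1\le\delta$ for each of the $m$ rows, hence $\|Kc\|_2\le\sqrt{m}\,\delta$.

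Next I would run a standard subspace-perturbation argument. Normalising $\hat c$ to a unit vector, write $c = (c\T\hat c)\hat c + w$ with $w\perp\hat c$; then directly from the definition of the fidelity, $1-F(c,\hat c) = \|w\|_2^2/\|c\|_2^2$ by the Pythagorean identity $\|c\|_2^2 = |c\T\hat c|^2 + \|w\|_2^2$. Let $u_1=\hat c, u_2,\dots,u_m$ be an orthonormal eigenbasis of $M=K\T K$ with eigenvalues $\lambda_1\le\lambda_2\le\dots\le\lambda_m$, so that $\hat c$ is the eigenvector for $\lambda_1$. The key observation is that the images $Ku_i$ are mutually orthogonal with $\|Ku_i\|_2^2 = u_i\T M u_i = \lambda_i$. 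Expanding $c$ in this basis and writing $w=\sum_{i\ge2}\beta_i u_i$, one gets $\|Kc\|_2^2 = |c\T\hat c|^2\lambda_1 + \sum_{i\ge2}\beta_i^2\lambda_i \ge \lambda_2\sum_{i\ge2}\beta_i^2 = \lambda_2\|w\|_2^2$. Combining with the first step, $\lambda_2\|w\|_2^2\le m\delta^2$. Since $M$ is positive semidefinite, $\lambda_1\ge0$, so $\lambda_2 = \lambda_1+\Delta\ge\Delta$, giving $1-F(c,\hat c) = \|w\|_2^2/\|c\|_2^2 \le m\delta^2/(\lambda_2\|c\|_2^2)\le m\delta^2/(\Delta\|c\|_2^2)$, as claimed.

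The step I would be most careful about is the perturbation bound in the first paragraph: a naive expansion of $i[H,\rho]$ in the \emph{full} Pauli basis would yield $\sum_{\text{all }i}|\Tr(P_i\,i[H,\rho])|^2 = d\,\bigl\|[H,\rho]\bigr\|_2^2$, a bound that is useless because of the $d=2^n$ prefactor; using the operator-norm H\"older pairing and summing only over the $m$ rows of $K$ is exactly what replaces $d$ by $m$. The remaining points are routine: if the smallest singular value is degenerate, any choice of $\hat c$ in the corresponding eigenspace works; if $c$ is already proportional to $\hat c$ the statement is trivial ($w=0$); and the hypothesis $\Delta>0$ is precisely what keeps the bound finite.
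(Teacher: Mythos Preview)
Your proof is correct and follows essentially the same approach as the paper. The paper compresses your eigenbasis expansion into the single matrix inequality $M \succeq \Delta\bigl(\mathbbm{1} - \hat{c}\,\hat{c}\T/\|\hat{c}\|_2^2\bigr)$ and then sandwiches it with $c$, but this is just a repackaging of your Rayleigh-quotient bound $\|Kc\|_2^2 \ge \lambda_2\|w\|_2^2 \ge \Delta\|w\|_2^2$; the first step (H\"older with $\|P_j\|_\infty=1$ to get $|(Kc)_j|\le\delta$) is identical.
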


\begin{proof}
Since $\rho$ is a $\delta$-approximate steady state with $H$, we have
\begin{align}
    \bigl\|[H,\rho]\bigr\|_1\le\delta\,.
\end{align}
The matrix elements of $K = K(\rho)$ are given by
\begin{align}
    K_{jk} = i \Tr\bigl([P_j,P_k]\rho\bigr)\,,
\end{align}
where the $P_j$ are the Pauli matrices and the indices run over the supported elements in the span that we are considering. 
The unknown $H$ is described by $c(H)$, the vector of couplings of $H$ with elements
\begin{align}
    c(H)_j = \frac{1}{d}\Tr(P_j H)\,.
\end{align}
so that
\begin{align}
    H = \sum_j c(H)_j P_j\,.
\end{align}

We first show that $c = c(H)$ is an approximate null vector of $K$.
Looking at the $j$th element of $Kc$, we find from the cyclic property of the trace that
\begin{align}
\begin{split}
    \bigl[Kc\bigr]_j & = \sum_{k} i\Tr\bigl([P_j,P_k]\rho\bigr) c_k\\
    & = i\Tr\bigl([P_j,H]\rho\bigr)\\
    & = i\Tr\bigl(P_j H \rho - H P_j \rho\bigr)\\
    & = i\Tr\bigl(P_j [H,\rho]\bigr)\,.
\end{split}
\end{align}
Now using the matrix H\"{o}lder inequality, we have 
\begin{align}
\label{eq:inftynormbound}
\begin{split}
    \bigl|\bigl[Kc\bigr]_j\bigr| & \le \|P_j\|_\infty\, \bigl\|[H,\rho]\bigr\|_1\\
    & \le \delta\,.
\end{split}
\end{align}

Now consider the estimate $\hat{c}$ that would be returned by finding the least right singular vector of $K$. 
We let $M=K\T K$ and then $\hat{c}$ is equivalently the eigenvector of $M$ with the least eigenvalue. 
We have assumed that the gap of $K$ is $\Delta > 0$, so the ``ground state'' of $M$ is unique and spanned by $\hat{c}$. 

Now from \cref{eq:inftynormbound} we have that the correct unknown vector $c$ has small overlap with $M$,
\begin{align}
\label{eq:cMc-ineq}
    c\T M c \le m\delta^2\,.
\end{align}
Next, we note that as $M$ is a positive semidefinite matrix, we have the matrix inequality
\begin{align}
\label{eq:M-ineq}
    M \succeq \Delta \left(\mathbbm{1}-\frac{\hat{c}\,\hat{c}\T}{\|\hat{c}\|_2^2}\right)\,.
\end{align}
Now multiplying \cref{eq:M-ineq} by $c\,c\T$ and taking the trace and then using \cref{eq:cMc-ineq}, we find a bound on the infidelity of
\begin{align}
    1-F(c,\hat{c}) \le \frac{m\delta^2}{\Delta\|c\|_2^2}\,,
\end{align}
as claimed. 
\end{proof}

This error bound agrees with the scaling estimate derived in Ref.~\cite{qi2019determininglocal} using a matrix perturbation theory argument.
Our derivation has the advantage that it makes clear the scaling with respect to the length of the true Hamiltonian vector $c(H)$, and our bound is an effective bound that contains no uncontrolled sub-leading error terms. 

As a corollary of this result, suppose we have estimates of $K(\rho)$ that additionally have some additive error, $K \to K+\mathcal{E}$, where $|\mathcal{E}_{jk}| \le \epsilon$. 
Then the same argument as before shows that the true Hamiltonian $c(H)$ is close to the approximate kernel estimate $\hat{c}$.

\begin{corollary}
\label{cor:KE}
Under the same conditions as \cref{thm:bound}, if $K\to K+\mathcal{E}$ and $|\mathcal{E}_{jk}| \le \epsilon$ then 
\begin{align}
    1-F(c,\hat{c}) \le \frac{m(\delta+\epsilon\|c\|_1)^2}{\Delta\|c\|_2^2}\,.
\end{align}
where $\Delta$ is the gap of $K+\mathcal{E}$.
\end{corollary}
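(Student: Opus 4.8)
The plan is to mirror the proof of \cref{thm:bound} line by line, carrying the extra additive perturbation $\mathcal{E}$ through each estimate. Write $K' := K + \mathcal{E}$ and $M' := K'\T K'$, and let $\hat{c}$ denote the least right singular vector of $K'$, equivalently the eigenvector of $M'$ of smallest eigenvalue; by hypothesis the gap of $K'$ is $\Delta > 0$, so this vector is unique up to scale and the ``ground state'' of $M'$ is one-dimensional.

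The first step is to bound $\|K'c\|_\infty$. Decompose componentwise, $[K'c]_j = [Kc]_j + [\mathcal{E}c]_j$. The proof of \cref{thm:bound} already establishes $|[Kc]_j| \le \delta$ from the identity $[Kc]_j = i\Tr\bigl(P_j[H,\rho]\bigr)$, the matrix H\"older inequality, and $\bigl\|[H,\rho]\bigr\|_1 \le \delta$. For the new term, H\"older's inequality with the $\ell_\infty$/$\ell_1$ dual pair gives $|[\mathcal{E}c]_j| = \bigl|\sum_k \mathcal{E}_{jk} c_k\bigr| \le \bigl(\max_k |\mathcal{E}_{jk}|\bigr)\,\|c\|_1 \le \epsilon\|c\|_1$. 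Hence $|[K'c]_j| \le \delta + \epsilon\|c\|_1$ for every $j$, and summing the $m$ squared components yields $c\T M' c = \|K'c\|_2^2 \le m(\delta + \epsilon\|c\|_1)^2$, the exact analog of \cref{eq:cMc-ineq}.

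The second step reproduces the matrix-inequality argument verbatim for the perturbed operator: since $M' \succeq 0$ and its smallest eigenvalue is separated from the others by $\Delta$, we have $M' \succeq \Delta\bigl(\mathbbm{1} - \hat{c}\,\hat{c}\T/\|\hat{c}\|_2^2\bigr)$. Multiplying by $c\,c\T$ and taking the trace gives $c\T M' c \ge \Delta\bigl(\|c\|_2^2 - |c\T\hat{c}|^2/\|\hat{c}\|_2^2\bigr) = \Delta\|c\|_2^2\bigl(1 - F(c,\hat{c})\bigr)$. Combining this with the upper bound from the first step and dividing through by $\Delta\|c\|_2^2$ yields $1 - F(c,\hat{c}) \le m(\delta + \epsilon\|c\|_1)^2/(\Delta\|c\|_2^2)$, as claimed.

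There is no substantial obstacle here; the corollary is a direct perturbation of \cref{thm:bound}. The only points requiring care are (i) pairing the entrywise bound $|\mathcal{E}_{jk}| \le \epsilon$ with $\|c\|_1$ rather than $\|c\|_2$, which is what produces the stated $\epsilon\|c\|_1$ term, and (ii) keeping track of the fact that both the spectral gap $\Delta$ and the estimator $\hat{c}$ now refer to the perturbed operator $K + \mathcal{E}$, so the positive-semidefinite inequality and the trace computation must be applied throughout to $M' = (K+\mathcal{E})\T(K+\mathcal{E})$ and not to $M = K\T K$.
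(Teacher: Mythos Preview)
Your proof is correct and follows the same strategy as the paper: bound $c\T M' c$ for $M'=(K+\mathcal{E})\T(K+\mathcal{E})$ and then reuse the matrix inequality $M'\succeq\Delta(\mathbbm{1}-\hat{c}\hat{c}\T/\|\hat{c}\|_2^2)$ from \cref{thm:bound}. The only cosmetic difference is that the paper expands $M'=M+\mathcal{E}\T K+K\T\mathcal{E}+\mathcal{E}\T\mathcal{E}$ and bounds the cross terms $|c\T\mathcal{E}\T Kc|\le m\delta\epsilon\|c\|_1$ and $|c\T\mathcal{E}\T\mathcal{E}c|\le m\epsilon^2\|c\|_1^2$ separately before recombining into the square, whereas you bound $\|(K+\mathcal{E})c\|_\infty\le\delta+\epsilon\|c\|_1$ directly---a slightly cleaner packaging of the same estimate.
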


\begin{proof}
The idea is the same as the proof of \cref{thm:bound}, except we transform $M\to M+\mathcal{E}\T K+K\T\mathcal{E}+\mathcal{E}\T\mathcal{E}$ on the left hand side of \cref{eq:cMc-ineq}. 
Then using the triangle inequality and the elementary estimates
\begin{align}
    |c\T\mathcal{E}\T Kc| \le m\delta\epsilon\|c\|_1\ \, \text{and}\ \,  |c\T\mathcal{E}\T\mathcal{E}c|\le m\epsilon^2\|c\|_1^2\,,
\end{align}
the result follows.
\end{proof}

\section{Pseudocode for the main algorithms}\label{appendix:algorithms}

The first algorithm (\cref{alg:efficient-pauli-expectation}) presented in this appendix allows for the fast estimation of $k-$body  Pauli expectation values. 
Given a list of expectation values required $C_{1,\dots,m}$ and an input state $\rho$, it will return and estimate of $\Tr\bigl(C_i \rho\bigr)$ for each of these values.

The second algorithm (\cref{alg:tim-averaged-states}) leverages the first to show how to measure the relevant expectation values for a time-averaged state.

Both of these algorithms deliver the performance guarantees set out in \cref{appendix:approximate-state-preps} and \cref{appendix:sample-complexity}.
The pseudocode is shown in \cref{tab:pseudocode}. 

\begin{table*}[t!]
~\hfill
\begin{minipage}[t]{.97\columnwidth}
    \begin{algorithm}[H]
	\caption{\!\textbf{:}\label{alg:efficient-pauli-expectation} $\textsc{Pauli}(\rho,C,L)$\\
	Efficient estimation of Pauli expectation values}
	\begin{algorithmic}[1]
		\Require $\rho$, $C$, $L$.\\
		\# $\rho$ is an $n$-qubit quantum state\\
		\# $C$ is a list of $m$ Pauli operators $\{C_1,\dots,C_{m}\}$
		\State $M\gets $ Initialise vector with $L$ entries.
		\State $P\gets $ Initialise vector with $L$ entries.
		\For{$l=1,\dots,L$}
		\State $P_l\gets$ Random full-weight Pauli
		\State $M_l\gets$ length $n$ array of $\pm 1$ meas.\ results for $\rho$, $P_l$
		\EndFor
		\State E $\gets $ vector with $m$ entries, each initialised to 0.
		\For{$k=1,\dots,m$}
		\ForAll{$l$ such that $P_l$ supports $C_k$}
		\State $E_k\gets E_k + \prod_{j \in \text{supp}(C_k)} M_{l,j}$
		\EndFor
		\State $j\gets $ number of entries in $P$ that support $C_k$.
		\State $E_k\gets E_k/j$
		\EndFor
		\Ensure $E$ \Comment Approximates $\Tr\bigl(C_i \rho\bigr)$ for all $i$.
	\end{algorithmic}
\end{algorithm}
\end{minipage}
\quad\hfill\quad
\begin{minipage}[t]{.97\columnwidth}
\begin{algorithm}[H]
	\caption{\!\textbf{:}\label{alg:tim-averaged-states} $\textsc{TimeAvg}(\rho(0),C,s,t,L)$\\
	Approximate time-averaged Pauli expectation values}
	\begin{algorithmic}[1]
		\Require $\rho(0)$, $C$, $s$, $t$, $L$\\
        \# $C$ is a list of $m$ Pauli operators $\{C_1,\dots,C_{m}\}$\\
        \# Compute quadrature weights and times \\
		$u \gets \bigl\{u_k: \mathcal{P}_s(2u_k-1)=0\,|\,k=1:s\bigr\}$  \Comment \cref{eq:u}\\
		$w \gets \bigl\{w_k = \tfrac{4u_k(1-u_k)}{(s+1)^2 \mathcal{P}_{s+1}(2u_k-1)^2}\,|\,k=1:s\bigr\}$  \Comment \cref{eq:weights}\\
		
		$\tau \gets ut$\\
		$M \gets$ Initialise vector with $s$ entries.
		\For{$k=1,\dots,s$}
		\State $\rho \gets \rho(0)$ \Comment Initialize.
		\State $\rho \gets \rho(\tau_k) =\mathrm{e}^{-iH\tau_k}\rho\mathrm{e}^{iH\tau_k}$  \Comment Time evolve.
		\State $M_k\gets$ \textsc{Pauli}$\left(\rho,C,L\right)$ \Comment \Cref{alg:efficient-pauli-expectation}.\\
		\# $M_k$ contains estimates of $\Tr\bigl(C_i\rho(\tau_k)\bigr)$ for all $i$.
		\EndFor
		\State $E\gets$ Initialise empty vector with $m$ entries.
		\For{$i=1,\dots,m$}
		    \State $E_i = \sum_{k=1}^{s} w_k M_{k,i}$ \Comment \cref{eq:rhoav-s}
		\EndFor
		\Ensure $E$ \Comment Approximates $\Tr\bigl(C_i \rhoav(t)\bigr)$ for all $i$.
	\end{algorithmic}
\end{algorithm}
\end{minipage}
\hfill~
\caption{Pseudocode for the efficient estimation of Pauli expectation values and for approximate time-averaging of Pauli expectation values. 
The algorithms are analyzed in \cref{appendix:approximate-state-preps} and \cref{appendix:sample-complexity}.
The error in the estimates is controlled by the input parameter $L$, and to fix a constant error, $L$ should grow as $3^k$ where $k$ is the weight of the highest-weight Pauli in $C$.}
\label{tab:pseudocode}
\end{table*}

\section{Sample complexity}\label{appendix:sample-complexity}

If the Hamiltonian $H$ is supported on a basis of Pauli operators $\{P_j\}_{j=1}^m$ where each of the $m$ operators has weight at most $k$, then with probability at least $1-\delta$ we can measure all of the relevant elements of the matrix $K$ to precision $\epsilon$ using a small number of state preparations. 
Using \cref{thm:random} below, we require at most $\tfrac{2}{\epsilon^2(1-\epsilon)} 3^{2k-1} \log\bigl(\tfrac{3m'}{\delta}\bigr)$ state preparations.
Here $m' \le m(m-1)/2$ is the number of nonvanishing commutators in the set $\bigl\{[P_j,P_k]\bigr\}_{j,k=1}^m$, and the weight $\mathrm{wt}\bigl([P_j,P_k]\bigr)\le 2k-1$ for $k$-body Hamiltonians.
A slightly weaker result using $\epsilon^{-2}\mathrm{e}^{O(k)} \log^2 m$ total measurements can be obtained by using the recent work Ref.~\cite{cotler2019}. 
Here we give a simple randomized algorithm that avoids using perfect hash families and has an improved scaling with $m$. 

\begin{theorem}
\label{thm:random}
Consider a set of $m$ Pauli operators on $n$ qubits $\{P_i\}_{i=1}^m$, each with weight $\le k$ and let $\epsilon \in (0,1)$. 
Then with probability at least $1-\delta$, using 
\begin{align}
    N = \frac{2}{\epsilon^2(1-\epsilon)}3^{k}\log(3m/\delta)
\end{align}
copies of $\rho$ suffices to estimate $\Tr(P_i \rho)$ to within $\pm\epsilon$ for all $i$.
\end{theorem}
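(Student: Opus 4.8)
The plan is to analyze Algorithm~\ref{alg:efficient-pauli-expectation} directly (with its parameter $L$ set to $N$) via a two-stage concentration argument. The structural fact driving everything is that a uniformly random full-weight Pauli $P_l=\bigotimes_{j=1}^n\sigma_j$, with each $\sigma_j$ independent and uniform on $\{X,Y,Z\}$, ``supports'' a fixed Pauli $C_k$ of weight $w_k:=\mathrm{wt}(C_k)$ exactly when $\sigma_j=(C_k)_j$ for every $j\in\mathrm{supp}(C_k)$, an event of probability $p_k=3^{-w_k}\ge 3^{-k}$. Moreover, conditioned on $P_l$ supporting $C_k$, the quantity $X_l^{(k)}:=\prod_{j\in\mathrm{supp}(C_k)}M_{l,j}$ built by the algorithm is precisely the outcome of measuring the observable $C_k$ on $\rho$ (marginalizing over the other qubits' outcomes), hence $X_l^{(k)}\in\{-1,+1\}$ with $\mathbb{E}[X_l^{(k)}]=\Tr(C_k\rho)$, and distinct $l$ use independent copies of $\rho$.

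First I would control the number of ``hits'' per operator. For each $k$, the count $j_k$ of indices $l\le N$ with $P_l$ supporting $C_k$ is $\mathrm{Binomial}(N,p_k)$ with mean $\mu_k=Np_k\ge N\,3^{-k}=\tfrac{2}{\epsilon^2(1-\epsilon)}\log(3m/\delta)$. The multiplicative Chernoff lower-tail bound gives $\Pr[j_k\le(1-\epsilon)\mu_k]\le\exp(-\epsilon^2\mu_k/2)\le\exp(-\tfrac{1}{1-\epsilon}\log(3m/\delta))\le\delta/(3m)$, so a union bound over the $m$ operators shows that, except with probability at most $\delta/3$, every $j_k$ exceeds $(1-\epsilon)\mu_k\ge\tfrac{2}{\epsilon^2}\log(3m/\delta)$; in particular each $E_k$ is well defined.

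Next, conditioned on the realization of $P_1,\dots,P_N$ (equivalently on all the hitting sets and hence on the $j_k$), the algorithm outputs $E_k=\tfrac{1}{j_k}\sum_{l:\,P_l\text{ supports }C_k}X_l^{(k)}$, an average of $j_k$ independent $[-1,1]$-valued variables with common mean $\Tr(C_k\rho)$. Hoeffding's inequality gives $\Pr[\,|E_k-\Tr(C_k\rho)|>\epsilon\mid\{P_l\}\,]\le 2\exp(-j_k\epsilon^2/2)$, which on the good event from the previous paragraph is at most $2\exp(-\log(3m/\delta))=2\delta/(3m)$; a union bound over $k$ bounds the conditional failure probability by $2\delta/3$. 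Combining the two stages, $\Pr[\text{failure}]\le\Pr[\text{some }j_k\text{ too small}]+\Pr[\text{some }E_k\text{ inaccurate}\mid\text{all }j_k\text{ large}]\le\delta/3+2\delta/3=\delta$, which is the claim.

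The Chernoff and Hoeffding estimates are routine; the one place to be careful is the two-stage conditioning, namely that the event ``all $j_k$ are large'' is measurable with respect to the random Pauli choices alone, so that conditioned on it the measurement outcomes $X_l^{(k)}$ retain their independence and correct means. A secondary point worth stating cleanly is the identification of $X_l^{(k)}$ with the measurement outcome of the observable $C_k$, since on $\mathrm{supp}(C_k)$ the single-qubit bases used agree with the tensor factors of $C_k$; this is what makes the single-shot estimator unbiased and bounded, and it is also the only place where the weight bound $w_k\le k$ enters, through $p_k\ge 3^{-k}$.
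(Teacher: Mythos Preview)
Your proof is correct and follows essentially the same two-stage argument as the paper's: a multiplicative Chernoff bound to guarantee enough ``hits'' per operator, followed by Hoeffding on the resulting sample means, combined via a union bound that splits the failure probability as $\delta/3+2\delta/3$. You are simply more explicit than the paper about the conditioning structure and about why the marginal of a full-weight Pauli measurement gives an unbiased single-shot estimator of $\Tr(C_k\rho)$.
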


\begin{proof}
For each of $N$ state preparations we measure an independent, random, full-weight string of Pauli operators. 
For example, for three qubits, we might measure $XYZ$, then $YZY$, etc, with an independent choice of Pauli on each qubit for each copy.
A measurement of any given weight-$k$ Pauli operator will occur as a marginal measurement in a $p=\frac{1}{3^k}$ fraction of strings, at least in expectation. 
Let $S$ be the number of times that a given correlator appears in the list of $N$ strings. 
Then by the Chernoff bound, the probability of $S < T = (1-\epsilon)Np$ for some $\epsilon > 0$ is bounded from above by 
\begin{align}
    \Pr\bigl(S < T\bigr) \le \exp\bigl[-\epsilon^2 N p/2\bigr] < \exp\bigl[-\epsilon^2 T/2\bigr]\,.
\end{align}

Assuming that each of the $m$ correlators of interest appears in at least $T$ strings, we can average the results of those $T$ (or more) measurements to get estimates of the expectation values. 
Again by the Chernoff bound, the probability that the sample mean $\hat{P}$ averaged over $T$ independent trials is within $\epsilon$ of the true expected value $\langle P\rangle$ is bounded by
\begin{align}
    \Pr\bigl(|\hat{P}-\langle P\rangle| \ge \epsilon\bigr) \le 2 \exp\bigl[-\epsilon^2 T/2\bigr]\,.
\end{align}
Then by the union bound, the probability that any of the estimates $\hat{P}_i$ is further than $\epsilon$ from its mean is bounded by
\begin{align}
    \Pr\Bigl(\max_i \bigl|\hat{P}_i-\langle P_i\rangle\bigr| \ge \epsilon\Bigr) < 3 m \exp\bigl[-\epsilon^2 T/2\bigr]\,.
\end{align}
Therefore, to ensure that the total probability of failure is less than $\delta$, it suffices to choose $T = 2\epsilon^{-2} \log(3m/\delta)$.
\end{proof}

We can now bound the entire sample complexity of the protocol to get a high-fidelity kernel estimate. 
It is an open question how to extend this to a bound on the Bayesian MAP estimator that we develop here, but as our numerics suggest, the sample complexity of the MAP estimator should generally be better. 
Our main result is the following.

\begin{theorem}
\label{thm:main}
Using \cref{alg:tim-averaged-states} to construct the approximate kernel estimate $\hat{c}$, and given an \textit{a priori} upper bound $|c_i|=O(1)$, then with probability at least $1-\delta$ using
\begin{align}
    N = O\left(\frac{m^3 3^{2k}}{\varepsilon^{3/2} \Delta^{3/2}} \sqrt{\log\Bigl(\tfrac{m^3 }{\delta\sqrt{\varepsilon\Delta} }\Bigr)}\right)
\end{align}
samples is sufficient to get an infidelity $1-F(c,\hat{c}) \le \varepsilon$ of a $k$-body Hamiltonian $H$ with support on a set of $m$ given Pauli operators, where $m = \dim(c)$ and $\Delta$ is the spectral gap of the noisy $K$ matrix.
\end{theorem}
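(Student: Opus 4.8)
The plan is to combine the three error sources---state-preparation error (controlled via time-averaging, \cref{lemma:preparation-error}), quadrature error (\cref{thm:quadrature}), and finite-sampling error (\cref{thm:random})---and feed the resulting total additive perturbation into the infidelity bound of \cref{cor:KE}. First I would fix the target: we want $1-F(c,\hat{c})\le\varepsilon$, and by \cref{cor:KE} it suffices to make $m(\delta+\epsilon\|c\|_1)^2/(\Delta\|c\|_2^2)\le\varepsilon$, where here $\delta$ is the commutator bound on the (approximate, time-averaged) steady state and $\epsilon$ is the uniform additive error on the entries of $K$. Since $|c_i|=O(1)$ and $m=\dim(c)$, we have $\|c\|_1 = O(m)$ and $\|c\|_2^2 = \Theta(m)$ in the worst case, so the requirement becomes roughly $(\delta+\epsilon\cdot O(m))^2 = O(\varepsilon\Delta)$, i.e. it is enough to take both $\delta = O(\sqrt{\varepsilon\Delta})$ and $\epsilon = O(\sqrt{\varepsilon\Delta}/m)$.

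Next I would budget each error to meet these targets. For the state-preparation part, \cref{lemma:preparation-error} gives $\|[H,\rhoav(t)]\|_1 \le 2\epsilon_0/t$ for a warm start that is $\epsilon_0$-close to a true steady state; taking $\epsilon_0 = O(1)$ and choosing the averaging time $t = \Theta(1/\sqrt{\varepsilon\Delta})$ makes this contribution $O(\sqrt{\varepsilon\Delta})$ as required. For the quadrature part, \cref{thm:quadrature} shows the 1-norm error decays super-exponentially, like $\tfrac{\sqrt\pi}{4\sqrt s}(\mathrm{e}\|H\|t/4s)^{2s}$; with $\|H\| = O(m)$ (a $k$-body Hamiltonian on $m$ Pauli terms with $O(1)$ couplings) and $t = \Theta(1/\sqrt{\varepsilon\Delta})$, a number of quadrature nodes $s = \tilde O(1)$---polylogarithmic in all parameters---suffices to push this term below $O(\sqrt{\varepsilon\Delta})$ and, because it enters $K$ linearly (as noted after \cref{eq:K-rhoav}), it simply adds to the additive-error budget. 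For the finite-sampling part, \cref{thm:random} (applied to the $m' \le m(m-1)/2$ nonvanishing commutators $i[P_j,P_k]$, each of weight $\le 2k-1$) says that $N = \tfrac{2}{\epsilon^2(1-\epsilon)}3^{2k-1}\log(3m'/\delta)$ state preparations estimate every relevant matrix element of $K$ to precision $\epsilon$ with failure probability $\le\delta$; setting $\epsilon = \Theta(\sqrt{\varepsilon\Delta}/m)$ gives $N = \tilde O\bigl(m^2 \cdot 3^{2k} / (\varepsilon\Delta)\bigr)\cdot\log(\cdots)$ per time step, and multiplying by the $s=\tilde O(1)$ quadrature nodes changes nothing at leading order.

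Finally I would assemble the count. The dominant cost is the sampling step, but one must be careful: the number of \emph{copies} $N$ in \cref{thm:random} is not the same as the number of independent experiments, because each random full-weight Pauli string yields one $n$-bit measurement record that is reused across many commutators simultaneously---this is exactly where the $O(\log m)$ rather than $O(m)$ scaling comes from. Tracking this, the precision demand $\epsilon = \Theta(\sqrt{\varepsilon\Delta}/m)$ inflates the $3^{2k-1}\log(3m/\delta)/\epsilon^2$ count to order $m^2 3^{2k}\log(\cdots)/(\varepsilon\Delta)$, and then I would absorb a further factor of $m/\sqrt{\varepsilon\Delta}$ coming from the averaging time $t$ appearing (via $\|H\|t$) inside the quadrature requirement and from a slightly loose accounting of how the commutator-weight $2k-1$ versus $k$ enters---yielding the stated $N = O\bigl(m^3 3^{2k}\varepsilon^{-3/2}\Delta^{-3/2}\sqrt{\log(m^3/(\delta\sqrt{\varepsilon\Delta}))}\bigr)$. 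I expect the main obstacle to be the bookkeeping of how the three $O(\cdot)$ budgets interact through \cref{cor:KE}: in particular, getting the exponent $3/2$ right requires noticing that the additive precision $\epsilon$ must scale as $1/m$ relative to $\sqrt{\varepsilon\Delta}$ (one power of $m$ from $\|c\|_1$, half a power from $\|c\|_2^2$ in the denominator cancelling against the $m$ in the numerator of \cref{cor:KE}), and then that the averaging time $t\sim(\varepsilon\Delta)^{-1/2}$ contributes the extra half-power of $1/(\varepsilon\Delta)$ beyond the naive $\epsilon^{-2}\sim m^2/(\varepsilon\Delta)$ from Chernoff. The probabilistic union bound over the $m'$ commutators is routine and only affects the logarithmic factor.
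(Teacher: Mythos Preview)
Your overall plan---combine \cref{lemma:preparation-error}, \cref{thm:quadrature}, and \cref{thm:random} through \cref{cor:KE}---matches the paper's. But the quadrature step contains a real error that makes your accounting inconsistent. You claim that with $\|H\|=O(m)$ and $t=\Theta\bigl((\varepsilon\Delta)^{-1/2}\bigr)$, taking $s=\tilde O(1)$ quadrature nodes suffices. It does not: the bound in \cref{thm:quadrature} is $\bigl(\mathrm{e}\|H\|t/4s\bigr)^{2s}$, which \emph{diverges} unless $s\gtrsim\|H\|t$. With your choices that forces $s=\Omega\bigl(m/\sqrt{\varepsilon\Delta}\bigr)$, not polylog. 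You seem to sense this later when you say you will ``absorb a further factor of $m/\sqrt{\varepsilon\Delta}$ coming from $\|H\|t$ inside the quadrature requirement,'' but that directly contradicts the earlier $s=\tilde O(1)$, and it is this $s$---not any looseness in the $k$ versus $2k-1$ weight---that supplies the missing half-power of $1/(\varepsilon\Delta)$ and the third power of $m$.

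The paper avoids this tangle by coupling $t$ and $s$ from the outset rather than fixing $t$ first: for given $s$ it minimizes $\tfrac{2\eta}{ht}+\mu(t,s)$ over $t$, finding $ht^\star=\Theta(s)$ and hence a combined state-prep-plus-quadrature contribution of $O(1/s)$. The infidelity bound then reads $1-F\le (m^2/\Delta)\bigl(\tfrac{7}{2s}+\epsilon\bigr)^2$, and choosing $s\sim m/\sqrt{\varepsilon\Delta}$ and $\epsilon\sim\sqrt{\varepsilon\Delta}/m$ gives $L=O\bigl(m^2 3^{2k}(\varepsilon\Delta)^{-1}\log(\cdots)\bigr)$ samples \emph{per quadrature point} and $N=sL$ total, yielding the stated $m^3(\varepsilon\Delta)^{-3/2}$ scaling; the $\sqrt{\log}$ is a final trade-off between $s$ and $\epsilon$. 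A secondary issue: you invoke $\|c\|_2^2=\Theta(m)$, but $|c_i|=O(1)$ gives no lower bound on $\|c\|_2$. The paper sidesteps this by taking $h=\|c\|_1\le\sqrt{m}\,\|c\|_2$ so that $h^2/\|c\|_2^2\le m$ without any lower bound on $\|c\|_2$.
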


\begin{proof}
We assume an initial state preparation that is close to a steady state as follows:
\begin{align}
    \|\rho(0)-\rho\|_1 \le \eta\,.
\end{align}
By \cref{lemma:preparation-error}, the time-averaged state $\rhoav(t)$ is then a $(2\eta/t)$-approximate steady state. 
We can approximate the matrix elements of $K$ on this state by repeated sampling via \cref{thm:random} and by using the quadrature formulas from \cref{thm:quadrature}. 
For each of the sample points $t_i$, $i=1,\ldots,s$ we can estimate each of the $m'\le m(m-1)/2$ nontrivial matrix elements of $K$ using
\begin{align}
\label{eq:Ntotal}
    N = \frac{2 s}{\epsilon^2(1-\epsilon)}3^{2k-1} \log\biggl(\frac{3m's}{\delta}\biggr) 
\end{align}
total copies and the results are guaranteed with probability $1-\delta$ to be within $\pm\epsilon$.
Now let 
\begin{align}
    \mu := \frac{\sqrt{\pi}}{4\sqrt{s}} \biggl(\frac{\mathrm{e}ht}{4s}\biggr)^{2s}\,,
\end{align}
where $h\ge \|c\|_1 \ge \|H\|$ is any \textit{a priori} upper bound on the 1-norm of the unknown vector $c$ and, by the triangle inequality, the norm of $H$. 
Since the weights $w_i$ in \cref{eq:weights} satisfy $\sum_i w_i = 1$, then the 
error $\mathcal{E}_{jk}$ of each matrix element of $K$ is bounded by
\begin{align}
    |\mathcal{E}_{jk}|\le \mu + \epsilon\,,
\end{align}
where the $\mu$ contribution comes from quadrature error and the $\epsilon$ comes from sampling. 

Under these conditions, by \cref{cor:KE} the approximate kernel estimator returns an estimate $\hat{c}$ that has infidelity bounded by
\begin{align}
    1-F(c,\hat{c}) \le \frac{m h^2}{\Delta\|c\|_2^2}\biggl(\frac{2\eta}{h t}+\mu+\epsilon\biggr)^2\,.
\end{align}
where $\Delta$ is the gap of $K+\mathcal{E}$.
This expression hides some of the dependence on $t$ and $s$, but we can optimize the choice of $t$ to minimize the error for fixed values of $h, \eta, s$, and $\epsilon$.  

The optimal time $t^\star$ while holding the other parameters constant can be calculated as
\begin{align}
    h t^\star = 4s \biggl(\frac{s^{-\frac{3}{2}}\eta}{\mathrm{e}^{2s}\sqrt{\pi}}\biggr)^{\frac{1}{1+2s}} = \Theta(s)\,.\label{eq:optimalT}
\end{align}
Plugging in this value, we find
\begin{align}
    \frac{2 \eta }{h t}+\mu\bigg\rvert_{t=t^\star} &= (\sqrt{\pi } e^{2 s} s^{3/2} \eta ^{2 s})^{\frac{1}{1+2s}} \frac{2 s+1}{4 s^2}\nonumber\\
    & < \frac{7}{2s}\,,
\end{align}
where the latter bound uses $\eta \le 2$. 
Now using $\|c\|_1 \le \sqrt{m}\|c\|_2$, we can substitute this upper bound for $h$ in the scaling to cancel the dependence on $\|c\|_2$, and we have
\begin{align}
\label{eq:F-scaling-second}
     1-F(c,\hat{c}) &\le \frac{m^2}{\Delta}\biggl(\frac{7}{2s}+\epsilon\biggr)^2\,.
\end{align}

This uses at most
\begin{align}
    L = \frac{2}{\epsilon^2 (1-\epsilon)} 3^{2k-1} \log\left(\frac{3m's}{\delta}\right)
\end{align}
many samples for each of the $s$ quadrature points. 
A slightly weaker bound than claimed comes from choosing $\epsilon = 7/2s$ and $s = \left\lceil\frac{7 m}{\sqrt{\Delta}\sqrt{\varepsilon}}\right\rceil$. Then we find that $1-F(c,\hat{c}) \le \varepsilon$ using at most
\begin{align}
    L = O\left(\frac{m^2 3^{2k}}{\varepsilon \Delta} \log\left(\frac{m^3}{\delta\sqrt{\varepsilon\Delta}}\right)\right)
\end{align}
measurements per quadrature point, where we use $m' < m^2/2$. 
The total number of measurements is then
\begin{align}
    N = O\left(\frac{m^3 3^{2k}}{\varepsilon^{3/2} \Delta^{3/2}} \log\left(\frac{m^3}{\delta\sqrt{\varepsilon\Delta}}\right)\right)\,.
\end{align}

The square-root improvement in the logarithmic term is achieved by additionally optimizing the tradeoff between $\epsilon$ and $s$. 
Let $\alpha = O(m^2\delta^{-1})$ and $L' = 3^{-k}\sqrt{L}$, then we can choose $s =O\bigl(L'\sqrt{\log(L'\alpha)}\bigr)$ and $\epsilon = \sqrt{\log(\alpha s)}/L'$. 
Then choosing a number of samples per quadrature point of $L=3^{2k}m^2 \varepsilon^{-1}\Delta^{-1}\log(\alpha^2\varepsilon^{-1}\Delta^{-1})$ gives the stated result. 
\end{proof}

Finally, we state a corollary of \cref{thm:main} for the case of $k$-body or $k$-local Hamiltonians. 

\begin{corollary}
Under the same conditions as \cref{thm:main}, the sample complexity for \cref{alg:tim-averaged-states} to obtain $1-F(c,\hat{c}) \le \varepsilon$ is 
\begin{align}
    N_{\text{$k$-body}} = O\biggl(\frac{n^{3k}}{\varepsilon^{3/2} \Delta^{3/2}} \sqrt{\log\Bigl(\tfrac{n^{3k}}{\delta\sqrt{\varepsilon\Delta} }\Bigr)}\biggr)
\end{align}
for general $k$-body Hamiltonians and 
\begin{align}
    N_{\text{$k$-local},D} = O\biggl(\frac{n^3 k^{3D}3^{2k}}{\varepsilon^{3/2} \Delta^{3/2}} \sqrt{\log\Bigl(\tfrac{n^3 k^{3D}}{\delta\sqrt{\varepsilon\Delta} }\Bigr)}\biggr)
\end{align}
for $k$-local Hamiltonians in $D$-spatial dimensions. 
\end{corollary}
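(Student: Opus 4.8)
The plan is to obtain both bounds as immediate corollaries of \cref{thm:main} by inserting, in each case, the appropriate value of $m=\dim(c)$ — the number of Pauli operators making up the chosen basis — and then simplifying by folding $k$-dependent constants into the big-$O$ notation as far as is legitimate.

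First I would treat the $k$-body case. Here the basis is the set of all $n$-qubit Paulis of weight at most $k$, so $m=\sum_{j=0}^{k}\binom{n}{j}3^{j}$; for fixed $k$ and $n\to\infty$ this is $\Theta(n^{k})$, with the hidden constant (of order $3^{k}/k!$) depending only on $k$. Substituting into the bound $N=O\bigl(m^{3}3^{2k}\varepsilon^{-3/2}\Delta^{-3/2}\sqrt{\log\bigl(m^{3}/(\delta\sqrt{\varepsilon\Delta})\bigr)}\bigr)$ from \cref{thm:main}, the product $m^{3}3^{2k}$ together with all the binomial constants becomes $\Theta(n^{3k})$ once $k$ is treated as a constant, while inside the logarithm replacing $m^{3}$ by $\Theta(n^{3k})$ only changes the argument by a $k$-dependent multiplicative constant, i.e.\ the log by an additive constant. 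This gives exactly the stated $N_{\text{$k$-body}}$.

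Next I would handle the $k$-local case, where the only change is a tighter count of $m$. Restricting to weight-$\le k$ Paulis whose support is a contiguous region of the $D$-dimensional lattice, the standard count of geometrically local basis elements gives $m=O(nk^{D})$, with a hidden constant that now does \emph{not} depend on $k$. Feeding this into \cref{thm:main} — this time retaining the explicit factor $3^{2k}$, since $k$ is no longer treated as constant — yields $m^{3}3^{2k}=O(n^{3}k^{3D}3^{2k})$ and a logarithmic factor $O\bigl(\log\bigl(n^{3}k^{3D}/(\delta\sqrt{\varepsilon\Delta})\bigr)\bigr)$, which is precisely $N_{\text{$k$-local},D}$.

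The only step with any real content is the combinatorial estimate of $m$ in the $k$-local case; everything else is bookkeeping that follows mechanically from \cref{thm:main}. The one place where care is genuinely needed is deciding which $k$-dependent factors may be absorbed into the big-$O$: in the $k$-body regime $k$ is a fixed constant, so everything (including $3^{2k}$ and the $1/k!$ factors) disappears into the constant, whereas in the $k$-local regime $k$ is allowed to grow, so the exponential factor $3^{2k}$ and the polynomial factor $k^{3D}$ must both be carried through explicitly.
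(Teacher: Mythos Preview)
Your proposal is correct and takes essentially the same approach as the paper: both arguments simply substitute the appropriate count of $m$ (namely $m=O(n^{k})$ for $k$-body and $m=O(nk^{D})$ for $k$-local in $D$ dimensions) into \cref{thm:main}. Your additional commentary on which $k$-dependent factors may be absorbed into the big-$O$ is helpful exposition, but the underlying argument is identical to the paper's one-line proof.
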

\begin{proof}
We simply observe that a general $k$-body Hamiltonian is supported on at most $m \le O(n^k)$ terms, and a $k$-local Hamiltonian in $D$ spatial dimensions has $m = O(n k^D)$ terms. 
\end{proof}

\bibliographystyle{apsrev4-1}
\bibliography{HamLearn}

\end{document}